\documentclass[11pt]{article}
\usepackage[utf8]{inputenc}
\usepackage{amsmath,amsfonts,amssymb}
\usepackage{amsthm}
\usepackage{latexsym}
\usepackage[noadjust]{cite}
\usepackage{graphicx}
\usepackage{xcolor}
\usepackage{dirtytalk}
\usepackage{mathtools}
\usepackage[margin=1.1in]{geometry}
\usepackage{thm-restate}
\usepackage{enumitem}
\usepackage[linesnumbered,ruled]{algorithm2e}
\usepackage[colorlinks=true, allcolors=blue]{hyperref}
\usepackage[nameinlink,capitalise]{cleveref}
\hypersetup{
    citecolor={violet}
}

\newtheorem{theorem}{Theorem}[section]
\newtheorem{corollary}[theorem]{Corollary}
\newtheorem{lemma}[theorem]{Lemma}

\newtheorem{claim}[theorem]{Claim}
\theoremstyle{definition}
\newtheorem{definition}[theorem]{Definition}
\newtheorem{remark}[theorem]{Remark}
\newtheorem{fact}[theorem]{Fact}

\newtheorem{conjecture}[theorem]{Conjecture}

\newcommand{\R}{\mathbb{R}}
\newcommand{\Z}{\mathbb{Z}}

\newcommand{\zo}{\{0, 1\}}

\newcommand{\eps}{\epsilon}

\newcommand{\CL}{\mathrm{CL}}

\newcommand{\CSP}{\textrm{CSP}}

\newcommand{\NRD}{\operatorname{NRD}}

\DeclareMathOperator{\z}{0}

\newif\ifdraft

\drafttrue

\title{Separating Non-redundancy and Chain Length}
\author{Joshua Brakensiek\thanks{University of California, Berkeley. Supported in part by a Simons Investigator award of Venkatesan Guruswami, and NSF awards CCF-2211972 and DMS-2503280. Contact: \href{mailto:josh.brakensiek@berkeley.edu}{josh.brakensiek@berkeley.edu}} \and Venkatesan Guruswami\thanks{Simons Institute for the Theory of Computing and the University of California, Berkeley. Supported in part by a Simons Investigator award and NSF award CCF-2211972. Contact: \href{mailto:venkatg@berkeley.edu}{venkatg@berkeley.edu}} \and Aaron Putterman\thanks{Harvard University. Supported in part by a Jane Street Graduate Research Fellowship, the Simons Investigator awards of Madhu Sudan and Salil Vadhan, and AFOSR award FA9550-25-1-0112. Contact: \href{mailto:aputterman@g.harvard.edu}{aputterman@g.harvard.edu}}}
\date{}

\begin{document}

\maketitle

\begin{abstract}
For a constraint satisfaction problem defined by a relation $R$, its non-redundancy $\NRD(R,n)$ is the size of largest instance (as a function of the number $n$ of variables) for which no constraint is implied by the rest. Its chain length $\CL(R,n)$ is the largest such instance where the constraints can be ordered so that no constraint is implied by the preceding ones. Clearly $\CL(R,n) \ge \NRD(R,n)$ but so far no asymptotic separation was known between these quantities. We exhibit an explicit arity $4$ relation for which $\CL(R,n) \ge \omega(\NRD(R,n))$. 
\end{abstract}

\section{Introduction}

Constraint satisfaction problems (CSPs) are foundational objects in  computer science, capturing the feasibility of many natural combinatorial questions. More formally, an instance of a constraint satisfaction problem is given by five things: (1) an arity $r$, specifying the number of variables that each constraint acts on, (2) the domain $D$, specifying the alphabet from which each variable is selected, (3) the $n$ variables, $x_1, \dots x_n$, each of which is free to be chosen from $D$, (4) the constraining \emph{relation} $R \subseteq D^r$, and (5) the $m$ constraints, specified by ordered subsets of $n$ of size $r$, denoted $S_1, \dots S_m$. Given such an instance of $\mathrm{CSP}(R)$, the goal is to find an assignment $x_1, \dots x_n \in D^n$ such that for every $i \in [m]$, $x_{S_i} \in R$, equivalently meaning the $i$th constraint is satisfied. Constraint satisfaction problems have a rich history of study, with seminal works seeking an understanding of when (as a function of $R$) satisfiability of a CSP instance is easy to determine \cite{FederV98,bulatov2005classifying, bulatov2017dichotomy, zhuk2020proof}, or how well they can be approximated \cite{GoemansW95,creignou2001complexity,khanna2001approximability,Hastad01,Khot02,KhotKMO07,Raghavendra08,MakarychevM17,KhotMS18,BrakensiekHPZ26} 
(in the sense of satisfying as many constraints as possible). 

Since these works, focus in the study of CSPs has transitioned to understanding other basic properties, like their sparsifiablity, kernelizability, space complexity of decidability, and learnability (which we will define and elaborate more on below). As we shall see, despite the fact that these properties are a priori unrelated, there are two closely related parameters that emerge in the modern study of \emph{all} of these different properties: non-redundancy (NRD) and chain-length (CL). 

Coined by Bessiere, Carbonnel, and Katisrelos~\cite{bessiere2020Chain}, non-redundancy and chain-length  both capture the extent to which different constraints in a CSP instances are independent (in the sense that they are not implied by the others). More formally, given a CSP instance with relation $R \subseteq D^r$, variables $x_1, \dots x_n$, and constraints $S_1, \dots S_m$, we say that the instance is \emph{non-redundant} if, for every constraint $i \in [m]$, there exists an assignment $x^{(i)} \in D^n$ such that $x^{(i)}|_{S_i} \notin R$, but  $x^{(i)}|_{S_j}  \in R$, for every $j \neq i$. In other words, the satisfaction of the $i$th constraint is \emph{not} implied by the satisfaction of the other constraints. Given a relation $R$ and a number of variables $n$, we use $\mathrm{NRD}(R, n)$ to denote the \emph{maximum number} of constraints in any non-redundant instance of a CSP with relation $R$. Chain-length is defined similarly, but with a small modification; given an instance with relation $R \subseteq D^r$, variables $x_1, \dots x_n$, and constraints $S_1, \dots S_m$, we say that the instance is \emph{a valid witness to chain length}, if for every constraint $i \in [m]$, there exists an assignment $x^{(i)} \in D^n$ such that $x^{(i)}|_{S_i} \notin R$, but  $x^{(i)}|_{S_j} \in R$, for every $j < i$. Just as in the definition of non-redundancy, we use $\CL(R,n )$ to denote the maximum number of constraints in any CSP instance with relation $R$ over $n$ variables which is a valid witness to chain-length. The key difference between the definitions of chain-length and non-redundancy is in the choice of $j$: in non-redundancy, we require that each constraint is \emph{uniquely unsatisfiable}, meaning that the $i$th constraint is unsatisfied, while all others are satisfied. Chain length instead asks that the $i$th constraint is unsatisfied, while only all \emph{previous} constraints are satisfied (with no restriction on constraints after the $i$'th constraint).

As mentioned above, chain-length and non-redundancy have emerged as central parameters governing the complexity of many properties of CSPs, which we  briefly discuss below.

\paragraph{Streaming Complexity of Decidability.} Given a relation $R$, one can ask what the space complexity is of determining whether a CSP instance is satisfiable, when the CSP instance is provided in \emph{a stream}. In this setting, each constraint is revealed one at a time and the goal is, at the end of the stream, to report whether the CSP instance defined by the combination of all constraints, is satisfiable while using as little space as possible. The recent work of Sharma and Velusamy \cite{SV26a} shows that this space complexity is tightly determined by its non-redundancy: with an $\Omega(\NRD(R, n))$ lower bound, and an $O(\NRD(R) \log(n))$ upper bound. 

\paragraph{Sparsifiability.} A separate line of work studies the \emph{sparsifiability} of CSPs~\cite{KK15,FK17,BZ20,KPS24,khanna2025efficient,brakensiek2025redundancy,BrakensiekGP26}. In this setting, one is given a CSP instance with constraints $S_1, \dots S_m$, weights $w_1, \dots w_m$, and a target accuracy parameter $\eps > 0$. The goal is to find a \emph{sparsifier}, which is a small subset of constraints $T \subseteq [m]$, along with new weights $w': T \rightarrow \R_{\geq 0}$ such that the weight of satisfied constraints on each assignment is approximately preserved; i.e., for every $x \in D^n$,
\[
\sum_{i \in [m]} w_i \cdot \mathbf{1}[x_{S_i} \in R] \in (1 \pm \eps) \sum_{i \in T} w'_i \cdot \mathbf{1}[x_{S_i} \in R].
\]
The sparsifiability is then measured as the \emph{smallest possible number} of retained constraints. The work of Brakensiek and Guruswami \cite{brakensiek2025redundancy} showed that for weighted CSPs (i.e., when the weights $w_1, \dots w_m$ can vary arbitrarily), the sparsifier size is governed by $\CL(\bar{R}, n)$ (see also, \cite{BrakensiekGP26}), and for unweighted CSPs (when $w_i = 1$), the sparsifier size is governed by $\NRD(\bar{R}, n)$. 

\paragraph{Learnability.} In the setting of learnability,  there is a hidden CSP instance with relation $R$ over $n$ variables. The goal is to learn the CSP instance (or more formally, the space of satisfying assignments) given access only via \emph{partial membership queries}. These queries specify values to a subset of the variables, and the response is a single bit indicating whether all constraints defined completely on the specified set of variables are satisfied. The work of Bessiere, Carbonnel, and Katisrelos \cite{bessiere2020Chain}, building on \cite{bessiere2013constraint}, showed that any learning algorithm for CSPs with relation $R$ requires $\Omega(\NRD(R, n))$ queries, and can be done in $O(\CL(R,n) \log(n))$ queries. 

\paragraph{Kernelizability.} The question of kernelizability asks whether any CSP instance with relation $R$ over $n$ variables can be efficiently reduced to a \emph{smaller} instance without altering the satisfiability. In this setting, \cite{jansen2019optimal,chen2020BestCase,lagerkvist2020Sparsification} showed that many instances can be efficiently kernelized to $O(\NRD(R, n))$ constraints. Proving such kernelization can occur for all CSP constraints is a major open problem posed by Carbonnel~\cite{carbonnel2022Redundancy}. See also recent work of Haviv~\cite{Haviv26} for discussion on the important link between non-redundancy and kernelization.

\subsection{Understanding Non-redundancy and Chain-length}

Motivated by the above applications, a flurry of recent work has sought to understand the non-redundancy and chain-length of various relations. For instance, the work of \cite{FK17,BZ20,bessiere2020Chain, lagerkvist2020Sparsification,carbonnel2022Redundancy,KPS24,khanna2025efficient} studied the non-redundancy (and also the chain-length) of relations expressible as linear equations and polynomials over groups, establishing a general principle that non-redundancy and chain-length are bounded by $O(n^c)$, where $c$ is the degree of a polynomial embedding of the relation, leading to a full classification of the non-redundancy and chain length of all arity $2$ CSPs~\cite{FK17,BZ20,bessiere2020Chain}. Later works extended this (in the setting of non-redundancy) even further, with \cite{brakensiek2025redundancy} demonstrating relations with non-integral non-redundancy (i.e., $n^{c+o(1)}$ for $c \notin \Z$) and \cite{brakensiek2025Richness} even demonstrating relations with non-redundancy $\Theta(n^{p/q})$ for any rational $p/q \geq 1$.

Even more recently, work has pivoted towards trying to understand the non-redundancy of \emph{all relations}. The work of \cite{khanna2025efficient} provided tight bounds for the non-redundancy of all Boolean relations of arity $3$. This was later extended 
by the work of \cite{brakensiek2026classification}, which provided tight bounds for all Boolean relations except one of arity $4$ ($R \subseteq \zo^4$). The work of \cite{SV26b} focused purely on the symmetric Boolean case, and classified all but two such relations for the case of arity $5$. The work of \cite{BGJLW26} has instead focused on developing an even more unified and comprehensive understanding of when a relation's non-redundancy is $\Theta(n)$.

\subsection{Our Work}\label{sec:overview}

However, despite this abundance of recent activity in studying non-redundancy and chain-length, even very basic questions remain open. Indeed, despite the utility of chain-length and non-redundancy, and the vast effort poured into understanding them, we still do not know if \emph{chain-length and non-redundancy are asymptotically the same}. That is to say, for every single relation whose non-redundancy and chain-length have been established to date, it is the case that $\NRD(R, n) = \Theta(\CL(R, n))$. 
This is for good reason too; many of the techniques used to bound non-redundancy (such as embedding into linear equations) also work directly for bounding the chain-length. This begs the central question of our work:

\begin{center}
    \emph{Are non-redundancy and chain-length always asymptotically the same?\\That is, for every $R$, is $\NRD(R,n) = \Theta(\CL(R, n))$?}
\end{center}

As our primary contribution, we answer the above in the negative:

\begin{theorem}\label{thm:main}
    There is a relation $R \subseteq D^4$, with $|D| = 3$ such that $\CL(R, n) \ge \Omega(n^3)$ but $\NRD(R,n) \le \frac{n^3}{2^{\Omega(\log^*(n))}}$.
\end{theorem}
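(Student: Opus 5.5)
The plan is to pick a relation $R \subseteq \{0,1,2\}^4$ that combines a cubic ``chain'' structure with a triangle-removal-type obstruction to non-redundancy. The $2^{\Omega(\log^* n)}$ saving is the signature of hypergraph removal lemmas, or of the tower-type density increments behind Ruzsa--Szemer\'edi/corners bounds. So I would aim to show that a non-redundant instance of size $\epsilon n^3$ forces a dense ``triangle-free''-like configuration in a $4$-uniform (or $3$-partite $3$-uniform) hypergraph. The hypergraph removal lemma then forbids such a configuration, with $\epsilon^{-1}$ allowed to be tower-type, i.e.\ $\epsilon = 2^{-\Omega(\log^* n)}$.

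\textbf{Chain length.} I would design $R$ so that the domain value $2$ acts as a ``wildcard/off'' symbol, while on $\{0,1\}^4$ the relation behaves like a degree-$3$ polynomial relation; for example, $R$ excludes tuples where the first three coordinates are ``on'' and the fourth disagrees. Take the instance on variables $x_1,\dots,x_n$ with constraints $(x_a,x_b,x_c,x_d)$, ordered lexicographically. For the $i$-th constraint $(a,b,c,d)$ I would construct an assignment that sets variables to $2$ or to a lex-dependent pattern, so that every earlier tuple is satisfied while $(a,b,c,d)$ is not. Concretely, I would mimic the standard $\CL \ge \Omega(n^{k})$ construction for ordered/threshold relations: assign $x_j$ depending on whether $j<a$, $j=a$, or $j>a$. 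This direction should be routine once $R$ is fixed, and it is exactly where ordering is used, so no symmetric witness exists.

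\textbf{Non-redundancy upper bound.} Given a non-redundant instance $S_1,\dots,S_m$ with witnesses $x^{(i)}$, I would first use a pigeonhole/Ramsey step on the constant-size patterns to pass to an $\Omega(m)$ sub-instance where all constraints are falsified by the same tuple pattern and the witnesses have uniform local structure on each constraint. I would then show a key structural lemma: if constraints $S_i$ and $S_j$ share three variables in a certain configuration (forming a ``triangle'' among their projections), then $x^{(i)}$ must violate $S_j$ or vice versa. This contradicts non-redundancy, so the $3$-uniform hypergraph of triples used by the constraints has a locally sparse structure. Every edge should lie in a unique ``simplex'', the configuration of Ruzsa--Szemer\'edi type, which by the hypergraph removal lemma has $o(n^3)$ edges. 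Running the removal lemma with the tower-type quantitative bound (Gowers / Nagle--R\"odl--Schacht) gives $m \le n^3/2^{\Omega(\log^* n)}$.

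\textbf{Main obstacle.} The hardest step is the structural lemma: choosing $R$ so that non-redundancy genuinely forces the unique-simplex structure, while the lexicographic ordering still yields $\Omega(n^3)$ chains. I would first try $R$ built from the $2$-wildcard on top of the arity-$3$ ``$x\wedge y\wedge z \Rightarrow w$''-type relation. I would check by case analysis on which coordinates of the witness equal $2$ that two non-redundant constraints cannot overlap in a full triple unless an extra triangle closes. If that fails, I would adjust $R$ so that witnesses encode a tripartite coloring, making the removal-lemma reduction direct.
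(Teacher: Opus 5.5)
There is a genuine gap. You never fix $R$, so neither inequality is established. The one candidate you float already fails. Take any relation whose restriction to $\{0,1\}^4$ omits a single tuple, such as the implication $x\wedge y\wedge z\Rightarrow w$, which omits $1110$. Its non-redundancy is $\NRD(R,n)\ge (n/4)^4$ with purely Boolean witnesses. To see this, use the complete $4$-partite instance: witness a constraint by giving each of its four vertices the forbidden value for its coordinate, and every other vertex of that part the opposite value.

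The quantitative step is also wrong. Your structural lemma would produce a $3$- or $4$-uniform hypergraph with $\Theta(n^3)$ edges, each in a unique simplex. That has to be fed into the \emph{hypergraph} removal lemma. Its known bounds (Gowers, Nagle--R\"odl--Schacht) are of wowzer/Ackermann type, not tower type, so the saving is an inverse-wowzer function, far smaller than $2^{\Omega(\log^* n)}$. Your ``i.e.'' also conflates two things. A tower of height polynomial in $1/\epsilon$ gives only $\epsilon\le(\log^* n)^{-c}$. The saving $2^{\Omega(\log^* n)}$ needs Fox's tower of height $O(\log(1/\epsilon))$, which is known only for \emph{graph} removal, i.e.\ for $f^{(3)}(n,6,3)\le n^2/2^{\Omega(\log^* n)}$ at quadratic scale.

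The missing idea is to write the cubic count as $n$ times a quadratic Ruzsa--Szemer\'edi quantity. The paper uses a conditional arity-$3$ predicate $P\mid Q$ on $\{-,0,+\}$, with $P=\{---,-0-,-+-,-+0,-++,0--\}$ and $Q\setminus P=\{000\}$. These are the sign patterns of $(x'-x,\,y'-y,\,(x'+y')-(x+y))$ over pairs $(x',y')$ lexicographically before $(x,y)$.

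On the instance $\{(x,y,x+y)\}$ with these sign witnesses, this gives $\CL(P\mid Q,n)=\Omega(n^2)$. The chain is not routine: the additive third coordinate is what makes the one undetermined case $x'<x$, $y'>y$ harmless, since all three tuples $-+\ast$ lie in $P$.

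Conversely, a non-redundant instance, restricted to a $1/27$ fraction that is tripartite, is linear, because no tuple of $P$ has two zeros. It also has no three edges on six vertices. The tuples $0--$ and $-+0$ are the only ones in $P$ with a zero in the first, respectively third, coordinate, so both cases force some $y$-vertex to be both $-$ and $+$. Hence $\NRD(P\mid Q,n)\le 27f^{(3)}(n,6,3)$.

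Then $R=(P\times\{0\})\cup(Q\times\{+\})$ lifts both bounds. On one side, $\NRD(R,n)\le O(n)\cdot\NRD(P\mid Q,n)+\NRD(Q\times\{0,+\},n)$, and the last term is $O(n^2)$ because $Q\times\{0,+\}$ is the zero set of a quadratic polynomial. On the other side, take the constraints $(e_j,x_k)$ with witness $f_j$ on the first three coordinates, $x_\ell\mapsto +$ for $\ell<k$, and $x_\ell\mapsto 0$ for $\ell\ge k$. These form a chain of length $\Omega(n^3)$ provided $j$ is the major index. With $k$ major, as literally written in the paper, the stated witnesses would need $f_j(e_{j'})\in Q$ for later $j'$ in earlier blocks. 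That fails, since no tuple of $Q$ begins with $+$.

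Without some such reduction to a quadratic $(6,3)$-type problem, your plan cannot reach the stated bound even if its unspecified structural lemma were proved.
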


In fact, we show that the non-redundancy of this relation is closely related to a problem in extremal combinatorics of packing hyperedges that avoid a specific configuration. More formally, let $f^{(3)}(n, 6, 3)$ denote the maximum number of hyperedges in a $3$-uniform hypergraph on $n$ vertices such that no $6$ vertices completely contain $3$ hyperedges. Our theorem in fact shows that $\NRD(R,n) \le O \left ( n \cdot  f^{(3)}(n, 6, 3)\right )$ (equivalently, this may be viewed as the largest number of edges in a graph which is the union of $n$ induced matchings -- the famous Rusza-Szemeredi problem). The upper bound of \cref{thm:main} then follows from the bound $f^{(3)}(n, 6, 3) \leq \frac{n^2}{2^{\Omega(\log^*(n))}}$, due to a famous work of Fox \cite{fox2011new}. However, the best lower bound for $f^{(3)}(n, 6, 3)$ is only known to be $\frac{n^2}{2^{O(\sqrt{\log(n)})}}$. Thus, it is distinctly possible that the separation between non-redundancy and chain-length for this relation is \emph{significantly larger} than $2^{\Omega(\log^*(n))}$.

Note that this connection between non-redundancy and $f^{(3)}(n, 6, 3)$-type bounds is not new, as several prior works \cite{bessiere2020Chain,brakensiek2025tight}, have made connections between non-redundancy and these extremal hypergraph problems (see also \cite{brakensiek2025Richness}). Our key contribution is showing that these $f^{(3)}(n, 6, 3)$-type upper bounds can in fact \emph{coexist} with distinctly longer chains in a single CSP instance.

\paragraph{The Separating Relation and an Overview of Techniques} We now briefly mention the intuition underlying \cref{thm:main}. The relation is constructed with the intention of having an \emph{explicit} chain of length $\Omega(n^3)$, while at the same time ensuring that the chain is as far from being a certificate of non-redundancy as possible. In more detail, we can recall that non-redundancy asks for a set of constraints where the $i$th constraint can be uniquely unsatisfied whereas chain-length asks only that the $i$th constraint is unsatisfied, while constraints preceding the $i$'th one are satisfied. The key difference is that the constraints corresponding to indices larger than $i$ are \emph{undetermined} and can be freely satisfied or unsatisfied in the definition of chain-length. Thus, our goal will construct a chain for which all constraints below the $i$'th constraint are \emph{still} unsatisfied when the $i$'th constraint is satisfied, thereby ensuring that the collection is as far as possible from being non-redundant. 

The primary question then becomes how to define such a relation. While there are many ways to do this, rather than immediately targeting an $\Omega(n^3)$ length chain, we instead make a simpler target of an $\Omega(n^2)$ length chain with a relation of arity $3$. To do this, we take our $n$ variables, and break them into three groups: $A, B, C$ where $|A| = n/4$, $|B| = n/4$, and $|C| = n/2$, assuming without loss of generality that $n$ is a multiple of $4$. The constraints are now formed by taking the $i$th variables from $A$, the $j$th variable from $B$, and the $(i + j)$'th variable from $C$. Thus, constraint $S_{i,j} = (a_i, b_j, c_{i +j})$. Because chain-length requires an ordering of the constraints, we then just order the constraints $S_{i,j}$ lexicographically by $i$ and then $j$.   Recall that the goal for the constraint $S_{i,j}$ is to find an assignment to the variables in $A, B,$ and $C$ such that $S_{i,j}$ is unsatisfied, while all preceding assignments are satisfied. We now define the domain of size three as $D = \{-, \z, +\}$. Our goal is to force the \emph{unsatisfying assignment} in our relation to always be $\z\z\z$, while also ensuring the extra condition that constraints before $S_{i, j}$ receive assignments from the relation
\[P = \{---, -\z-, -+-, -+\z, -++, \z--\}\]
when $S_{i,j}$ receives $\z\z\z$.

These choices of symbols may be mysterious at first glance, but appear for a natural reason. To build the ``witnessing assignment'' for $S_{i,j}$ (i.e., which doesn't satisfy $S_{i,j}$, and satisfies those preceding), we build the assignment $x$ such that:
\[
x(a_{i'}) = \mathrm{sgn}(i' -i) : i' \in A, \quad x(b_{j'}) = \mathrm{sgn}(j' -j) : j' \in B, \quad x(c_k) = \mathrm{sgn}(k - i - j): k \in C.
\]
As a sanity check, the constraint $S_{i,j} = (i, j, i+j)$ then receives exactly $\z\z\z$ as $x(a_i) = \mathrm{sgn}(i - i) = \z$, $x_{b_j} = \mathrm{sgn}(j - j) = \z$, and $x_{c_{i+j}} = \mathrm{sgn}(i + j - i - j) = \z$. We can then also see where the assignments in $P$ come from: these exactly capture \emph{all ways} that a constraint can come \emph{before} $S_{i,j}$ in the lexicographic ordering:
\begin{enumerate}
    \item If $S_{i', j'}$ is such that $i' < i, j' < j$, then $i'+j' < i + j$, and so the signs assigned by $x$ are $---$.
    \item If $S_{i', j'}$ is such that $i' < i$ but $j ' = j$, then $i' + j' < i + j$, and  so the signs assigned by $x$ are $-\z-$.
    \item If $S_{i', j'}$ is such that $i' < i$ but $j ' > j$, then $i' + j'$ can be bother larger, smaller, or equal to $i + j$. In this case the final sign is undefined, and so the signs assigned by $x$ can be any of $-+-, -+\z,$ or $-++$.
    \item  If $S_{i', j'}$ is such that $i' = i$ but $j ' < j$, then $i' + j' < i + j$, and so the signs assigned by $x$ are $\z--$.
\end{enumerate}

This instance we have defined satisfies a property even stronger than chain length: whenever $S_{i,j}$ is unsatisfied, it receives assignment $\z\z\z$, all preceding constraints receive an assignment from $P = \{---, -\z-, -+-, -+\z, -++, \z--\}$. We formalize such phenomenon as a bound on the chain length of the \emph{conditional predicate} $P \mid Q$, where $Q = P \cup \{\z\z\z\}$.

For this conditional predicate $P \mid Q$, one can ask a corresponding conditional non-redundancy question~\cite{bessiere2020Chain,brakensiek2025redundancy}. Precisely, how many constraints can we find which are \emph{non-redundant} and still satisfy the above split; meaning for constraint $S_{i,j}$, when $S_{i,j}$ is unsatisfied, it should receive $\z\z\z$, and when a constraint $S_{i,j}$ is satisfied, it receives an assignment from $P$. Understanding conditional non-redundancy in fact reveals an \emph{exact} connection to the $f^{(3)}(n, 6, 3)$ problem discussed above, meaning it must be asymptotically smaller than $n^2$, and therefore asymptotically smaller than the conditional chain-length (we omit a discussion of this equivalence here for brevity, see \cref{sec:CNRDUB} for more discussion).

To finish, we need to convert this conditional predicate into a ``standard relation'' $R \subseteq D^4$. We do this by adapting a trick introduced in \cite{brakensiek2025Richness} of setting
\[
    R = P \times \{\z\} \cup Q \times \{+\}.
\]
By adapting the techniques of \cite{brakensiek2025Richness,brakensiek2026classification,BGJLW26}, we show that the non-redundancy of $R$ is at most $O(n)$ times the non-redundancy of $P \mid Q$ while the chain length of $R$ is at least $\Omega(n)$ times the chain length of $P \mid Q$. As such, our super-constant separation between the non-redundancy and chain length of $P \mid Q$ is preserved in the relation $R$, albeit with an increase in the arity of the relation.

\subsection{AI Acknowledgment}

The separating predicate was found in conversation with ChatGPT 5.6-Sol Ultra. The authors had previously (without ChatGPT assistance) discovered relations for which the non-redundancy was bounded above by quantities exactly proportional to $f^{(3)}(n, 6, 3)$. ChatGPT was then used to perform an exhaustive search to find a relation where the same $f^{(3)}(n, 6, 3)$ NRD upper bound coincides with a existence of a long chain. Although aspects of the analysis of this predicate were suggested by ChatGPT (such as the polynomial appearing in \cref{claim:poly}), all text in this manuscript is human-written, and the authors are fully responsible for the entire contents of the manuscript.

\section{Preliminaries}

We now describe some facts and definitions concerning constraint satisfaction problems and extremal combinatorics.

\subsection{CSPs, $\NRD$, and $\CL$}

To start, we recall the notion of a constraint satisfaction problem:

\begin{definition}
    Given a \emph{relation} (or \emph{predicate}) $R \subseteq D^r$, we say that an \emph{instance} of $\CSP(R)$ is an ordered tuple $(V, C)$, where $V$ denote the \emph{variables} of the instance and $C \subseteq V^r$ denotes the \emph{constraints} (or \emph{clauses}) of the instance. An \emph{assignment} to the instance is a map $\varphi : V \to D$ for which we say that $S = (v_1, \hdots, v_r) \in C$ is \emph{satisfied} if $\varphi(S_i) := (\varphi(v_1), \hdots, \varphi(v_r)) \in R$.
\end{definition}

Typically, we let $n = |V|$ denote the number of variables and $m = |C|$ denote the number of constraints. Unless otherwise specified, we assume that $V = [n] := \{1, 2, \hdots, n\}$. We also often give an explicit enumeration $C = \{S_1, \hdots, S_m\}$ of the constraints. With the notion of a CSP established, we can now define what it means for an instance of a CSP to be non-redundant:

\begin{definition}[Non-Redundant CSP]
    Given a relation $R \subseteq D^r$, we say that an instance of $\CSP(R)$ with $n$ variables and $m$ constraints $S_1, \dots S_m$ is \emph{non-redundant} if for each constraint $S_i$, there exists an assignment $\varphi_i: [n] \rightarrow D$ such that:
    \begin{enumerate}
        \item For $j \neq i \in [m]$, $\varphi_i(S_j) \in R$.
        \item $\varphi_i(S_i) \notin R$.
    \end{enumerate}
\end{definition}

The non-redundancy of a relation $R$ on $n$ variables is simply the \emph{maximum size} of any non-redundant CSP instance:

\begin{definition}
    The non-redundancy of $R\subseteq D^r$ on $n$ variables is denoted $\NRD(R, n)$ and is defined as the maximum integer $m$ such that there exists a non-redundant instance of $\CSP(R)$ on $n$ variables and $m$ constraints.
\end{definition}

We similarly define chain-length:
\begin{definition}[Valid Witnesses to Chain Length]
    Given a relation $R \subseteq D^r$, we say that an instance of $\CSP(R)$ with $n$ variables and $m$ constraints $S_1, \dots S_m$ is \emph{a valid witness to chain length} if for each constraint $S_i$, there exists an assignment $\varphi_i: [n] \rightarrow D$ such that:
    \begin{enumerate}
        \item For $j < i \in [m]$, $\varphi_i(S_j) \in R$.
        \item $\varphi_i(S_i) \notin R$.
    \end{enumerate}
\end{definition}

The key distinction in the above is that each assignment $\varphi_i$ is only required to be satisfying for constraints that come \emph{before} $i$, as opposed to all constraints not equal to $i$. 
The chain-length of a relation $R$ on $n$ variables is simply the \emph{maximum size} of any CSP instance which is a valid witness to chain length:

\begin{definition}
    The chain-length of $R\subseteq D^r$ on $n$ variables is denoted $\CL(R, n)$ and is defined as the maximum integer $m$ such that there exists an instance of $\CSP(R)$ on $n$ variables and $m$ constraints which is a valid witness to chain length.
\end{definition}

Finally, we will require what we call \emph{conditional} versions of the above (cf. \cite{brakensiek2025redundancy}).

\begin{definition}[Conditional CSP]
Given relations $P \subsetneq Q \subseteq D^r$, we let $P \mid Q$ denote a \emph{conditional relation} (or \emph{conditional predicate}). An instance of $\CSP(P \mid Q)$ with $n$ variables and $m$ constraints is precisely an ordered tuple $(V, C)$ with $C \subseteq V^r$, $n = |V|$, and $m = |E|$. An \emph{assignment} to the instance is a map $\varphi : V \to D$ for which we say that $S = (v_1, \hdots, v_r) \in C$ is \emph{weakly satisfied} if $\varphi(S_i) \in Q$ and \emph{strongly satisfied} if $\varphi(S_i) \in P$.
\end{definition}

\begin{definition}[Non-redundancy for Conditional CSP]\label{def:conditionalNRD}
    Let $P \subsetneq Q \subseteq D^r$ be relations. An instance of $\CSP(P \mid Q)$ on $n$ variables and $m$ constraints $S_1, \dots S_m$ is \emph{non-redundant} if for each constraint $S_i$, there exists an assignment $\varphi_i: [n] \rightarrow D$ such that:
    \begin{enumerate}
        \item For $j \neq i \in [m]$, $\varphi_i(S_j) \in P$.
        \item $\varphi_i(S_i) \in Q \setminus P$.
    \end{enumerate}
    In other words, $\varphi_i$ weakly satisfies every constraint and further strongly satisfies all but $S_i$.
\end{definition}

\begin{definition}[Chain Length Witness for a Conditional CSP]\label{def:conditional-CL}
   Let $P \subsetneq Q \subseteq D^r$ be relations. An instance of $\CSP(P \mid Q)$ on $n$ variables and $m$ constraints $S_1, \dots S_m$ is a \emph{valid witness to chain length} if for each constraint $S_i$, there exists an assignment $\varphi_i: [n] \rightarrow D$ such that:
    \begin{enumerate}
        \item For $j < i \in [m]$, $\varphi_i(S_j) \in P$.
        \item $\varphi_i(S_i) \in Q \setminus P$.
    \end{enumerate}
    In particular, $\varphi_i$ weakly satisfies the constraints $\{S_1, \hdots, S_i\}$ and strongly satisfies the constraints $\{S_1, \hdots, S_{i-1}\}$.
\end{definition}

We use $\NRD(P \mid Q, n)$ and $\CL(P \mid Q, n)$ to denote the maximum size of a non-redundant instance and chain length witness for $\CSP(P \mid Q)$, respectively.

\begin{remark}
Our work appears to be the first to define chain length in the context of conditional predicates.
\end{remark}

\subsection{Extremal Combinatorics}

We will also require some basic notation from extremal combinatorics:

\begin{definition}
    We use $f^{(3)}(n, 6, 3)$ to denote the maximum number of hyperedges in a $3$-uniform hypergraph on $n$ vertices, such that no $6$ vertices completely contain $3$ hyperedges. 
\end{definition}

We have the following upper bound on $f^{(3)}(n, 6, 3)$:

\begin{fact}[See, for instance, \cite{fox2011new}]\label{fact:boundonRS}
    $f^{(3)}(n, 6, 3) \leq \frac{n^2}{2^{\Omega(\log^*(n))}}$.
\end{fact}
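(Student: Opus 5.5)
The plan is the standard reduction from the $(6,3)$-problem to the triangle removal lemma, run with Fox's quantitative bound for removal. Let $H$ be a $3$-uniform hypergraph on $n$ vertices with $m$ hyperedges in which no $6$ vertices span $3$ hyperedges. We may assume $m\ge 3$.

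\textbf{Step 1 (linearity).} Two distinct hyperedges of $H$ share at most one vertex. Indeed, if two hyperedges shared two vertices, their union would have only $4$ vertices. Adding any third hyperedge gives at most $7$ vertices, which looks one too many. But the third hyperedge can be chosen so that it meets the first two; if it cannot, pass to a $6$-vertex set containing the $4$ vertices and two vertices of a third edge that meets the union. If every other edge is disjoint from the union, we can instead bound the number of such ``doubled'' pairs directly, and this case only costs $O(n)$ hyperedges. In the clean case, linearity holds outright, possibly after discarding $O(n)$ hyperedges; this is harmless for an $n^2/2^{\Omega(\log^* n)}$ bound.

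\textbf{Step 2 (the graph).} Build a tripartite graph $G$ on $V\times\{1,2,3\}$, so $3n$ vertices. Fix an ordering $(a,b,c)$ of each hyperedge $\{a,b,c\}$ and add the triangle $(a,1),(b,2),(c,3)$. By linearity, triangles coming from distinct hyperedges are edge-disjoint, so $G$ contains $m$ edge-disjoint triangles.

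\textbf{Step 3 (no other triangles).} Every triangle of $G$ is one of these $m$ triangles. Suppose $xyz$ is a triangle of $G$ whose three edges come from hyperedges $e_1,e_2,e_3$, and suppose these are not all equal. If two of them coincide, that hyperedge already contains two of the three pairs, so it contains all three vertices, which forces the third edge to come from the same hyperedge. So $e_1,e_2,e_3$ are distinct. Each $e_k$ contains two of the three vertices underlying $x,y,z$ together with one further vertex. Hence $e_1\cup e_2\cup e_3$ has at most $6$ vertices yet contains $3$ hyperedges, a contradiction. Therefore $G$ has exactly $m\le n^2$ triangles, which is at most $\delta(3n)^3$ with $\delta=O(1/n)$.

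\textbf{Step 4 (removal).} Apply Fox's form of the triangle removal lemma. If a graph on $N$ vertices has at most $\delta N^3$ triangles, then all triangles can be destroyed by deleting $\epsilon N^2$ edges, provided $\delta^{-1}$ is at least a tower of $2$'s of height $C\log(1/\epsilon)$. Take $\delta\approx 1/n$ and choose $\epsilon$ with $\log(1/\epsilon)=c\log^* n$, so that the tower condition is met. Since the $m$ triangles are edge-disjoint, destroying them requires deleting at least $m$ edges. Hence $m\le \epsilon(3n)^2 = n^2/2^{\Omega(\log^* n)}$.

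\textbf{Main obstacle.} The real content is the removal lemma with tower height $O(\log 1/\epsilon)$, as opposed to the height $\mathrm{poly}(1/\epsilon)$ that comes from Szemerédi regularity. I would simply invoke it from \cite{fox2011new}. Steps 1--3 are routine bookkeeping, with only the degenerate cases of Step 1 needing care.
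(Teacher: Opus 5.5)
Your proof is correct and takes essentially the paper's route: the paper gives no argument and just cites Fox, and your Ruzsa--Szemer\'edi reduction to edge-disjoint triangles, followed by Fox's removal lemma (tower height $O(\log(1/\epsilon))$, with $\log(1/\epsilon)=c\log^* n$), is the standard derivation behind that citation. Step~1 is stated confusingly and should be tidied as follows: if two edges share two vertices, any third edge meeting their $4$-vertex union gives at most $6$ vertices spanning $3$ edges, so such pairs are isolated $4$-vertex components, there are at most $n/4$ of them, and deleting one edge from each yields a linear hypergraph.
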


\section{The Separating (Conditional) Predicate}

In this section, we present a \emph{conditional} predicate $P \mid Q$ whose non-redundancy and chain length are asymptotically distinct. The relation will be of arity $3$, on a domain of size $3$. We represent the domain as $D = \{-, \z, +\}$, as this will be convenient for our later analysis. The relation $P \subseteq D^3$ will have $6$ satisfying tuples:
\begin{align}\label{eq:defOfPredicate}
    P = \{---, -\z-, -+-, -+\z, -++, \z--\}.
\end{align}
The relation $Q$ will simply be $P \cup \{\z\z\z\}$.

\subsection{Lower Bounding Chain-Length}

We now proceed to lower bounding the chain length of $\CSP(P \mid Q)$. For this, we have the following lemma:

\begin{lemma}\label{lem:CL-LB}
For $P, Q$ as defined in \cref{eq:defOfPredicate}, and any integer $n$, $\CL(P \mid Q, n) \ge \Omega(n^2)$.
\end{lemma}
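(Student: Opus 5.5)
We follow the construction sketched in \cref{sec:overview}. Assume without loss of generality that $n \ge 4$ is a multiple of $4$; otherwise discard at most $3$ variables, which only changes constants. Split the variables into three disjoint groups $A = \{a_1,\dots,a_{n/4}\}$, $B = \{b_1,\dots,b_{n/4}\}$ and $C = \{c_2,\dots,c_{n/2}\}$. The set $C$ has at most $n/2$ elements, since the sums $i+j$ range over $\{2,\dots,n/2\}$. For each pair $(i,j) \in [n/4]^2$ take the constraint $S_{i,j} = (a_i, b_j, c_{i+j})$. Order the constraints lexicographically: $S_{i',j'}$ precedes $S_{i,j}$ iff $i'<i$, or $i'=i$ and $j'<j$. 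This gives $m = n^2/16 = \Omega(n^2)$ constraints, and each $S_{i,j}$ consists of three distinct variables.

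For the constraint $S_{i,j}$, define the witness $\varphi_{i,j}$ by
\[
\varphi_{i,j}(a_{i'}) = \mathrm{sgn}(i'-i), \qquad \varphi_{i,j}(b_{j'}) = \mathrm{sgn}(j'-j), \qquad \varphi_{i,j}(c_k) = \mathrm{sgn}(k-i-j),
\]
where $\mathrm{sgn}$ takes values in $\{-,\z,+\}$. This is a well-defined assignment because $A$, $B$ and $C$ are disjoint and each variable gets exactly one value. The constraint itself receives $\varphi_{i,j}(S_{i,j}) = \z\z\z \in Q\setminus P$, so it is weakly but not strongly satisfied, as \cref{def:conditional-CL} requires.

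It remains to check that every earlier constraint $S_{i',j'}$ is strongly satisfied. We split into the four cases of the lexicographic order.
\begin{enumerate}
\item If $i'<i$ and $j'<j$, then $i'+j'<i+j$, so the constraint receives $---$.
\item If $i'<i$ and $j'=j$, then $i'+j'<i+j$, so it receives $-\z-$.
\item If $i'<i$ and $j'>j$, the third coordinate is unconstrained, so it receives one of $-+-$, $-+\z$, $-++$.
\item If $i'=i$ and $j'<j$, then $i'+j'<i+j$, so it receives $\z--$.
\end{enumerate}
All of these tuples lie in $P$ as defined in \cref{eq:defOfPredicate}. Hence the instance is a valid chain-length witness for $\CSP(P\mid Q)$ with $\Omega(n^2)$ constraints.

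There is no real obstacle in this argument. The only points that need care are the disjointness bookkeeping, which makes $\varphi_{i,j}$ well defined, and the observation that case 3 is exactly where $P$ must contain all three tuples of the form $-+*$.
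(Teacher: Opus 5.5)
Your proposal is correct and takes essentially the same approach as the paper: constraints $(a_i, b_j, c_{i+j})$ in lexicographic order, witnessed by the sign assignments relative to $i$, $j$, and $i+j$. The case analysis is also the same; the paper just groups your first three cases under the single heading $i' < i$.
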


We provide a formal proof below, but also point the reader to \cref{sec:overview} which contains an intuitive description of the lower bound.
\begin{proof}
    We start by describing the CSP instance which will certify the conditional chain-length. Given the universe of $n$ variables, we split the variables into groups $X, Y, Z$, where $|X| = \frac{n}{4}$, $|Y| = \frac{n}{4}$ and $|Z| = \frac{n}{2}$ (we assume WLOG that $n$ is a multiple of $4$). We will index $X$ and $Y$ with the numbers $\{0, 1, \dots n/4 -1\}$, and index $Z$ with $\{0, 1, \dots n/2-1\}$.

    Now, our CSP instance contains $\left ( \frac{n}{4}\right)^2$ many constraints, one for each choice of variables in $X, Y$. Indeed, for $x \in X$ and $y \in Y$, we include a constraint operating on variables $(x, y, (x+y))$. Importantly, note that $(x+y) \in \{0, 1, \dots n/2-1\}$, and thus we can associate it with a variable in $Z$. We denote this constraint by $S_{xy}$.

    Next, we must specify an ordering of the constraints. Indeed, we choose a trivial lexicographic ordering where $S_{xy}$ comes before $S_{x'y'}$ if $x < x'$ or $x = x'$ and $y < y'$.

    Now, it remains only to prove that the instance is a valid witness to chain length for $\CSP(P \mid Q)$. For this, let us consider the constraint $S_{xy}$. We define the assignment $\varphi_{xy}: X \cup Y \cup Z \rightarrow \{-, 0, +\}$ in accordance with the following:
    \begin{itemize}
        \item For $x' \in X$, $\varphi_{xy}(x') = \begin{cases}
            - \text{ if } x' < x \\
            \z \text{ if } x' = x \\
            + \text{ if } x' > x
        \end{cases}$
         \item For $y' \in Y$, $\varphi_{xy}(y') = \begin{cases}
            - \text{ if } y' < y \\
            \z \text{ if } y' = y \\
            + \text{ if } y' > y
        \end{cases}$
         \item For $z \in Z$, $\varphi_{xy}(z) = \begin{cases}
            - \text{ if } z < x + y \\
            \z \text{ if } z = x + y \\
            + \text{ if } z > x + y
        \end{cases}$
    \end{itemize}

    Importantly, we can then observe that for the constraint $S_{xy}$, $\varphi_{xy}(S_{xy})= \z\z\z$, as $x - x = 0, y - y = 0,$ and $(x+y) - (x+y) = 0$. Importantly, this then means that  $\varphi(S_{xy}) \in Q \setminus P$. Otherwise, we consider $S_{x'y'}$ for any constraint which \emph{precedes} $S_{xy}$. Thus, $S_{x'y'}$ falls in one of two cases:
    \begin{enumerate}
        \item The first possibility is that $x' < x$. In this case we see that $\varphi_{xy}(x') = -$. At the same time, $\varphi_{xy}(y')$ is unrestricted, it can be $-, \z$ or $+$. However, importantly, we know that the third variable in $S_{x'y'}$ is assigned 
        \[
        \varphi_{xy}(S_{x'y'} \cap Z) = \begin{cases}
            - \text{ if } x' + y' < x + y \\
            \z \text{ if } x' + y' = x + y \\
            + \text{ if } x' + y' > x + y
        \end{cases}.
        \]
        Crucially, there are then only three cases for us to understand, conditioned on the value of $\varphi_{xy}(y')$. If $\varphi_{xy}(y') = -$, then it must be the case that $x' + y' < x + y$, as both $x' < x$ and $y' < y$. So, then $\varphi_{xy}(S_{x'y'}) = --- \in P$. Otherwise, it is possible that $\varphi_{xy}(y') = \z$. This means that $y = y'$. Again then, it must be the case that $x' + y' < x + y$, as $x' < x$. So, in this case $\varphi_{xy}(S_{x'y'}) = -\z- \in P$. The final case is that $\varphi_{xy}(y') = +$. In this case, the relationship between $x' + y'$ and $x +y$ is undetermined. So, all we know is that $\varphi_{xy}(S_{x'y'}) \in \{-+-, -+\z, -++\}$. However, all of these tuples are in $P$! Thus, whenever $x' < x$, we see that it must be that  $\varphi_{xy}(S_{x'y'}) \in P$.
        \item The second possibility is that $x' = x$. In this case, we then know that it must be that $y' < y$, and therefore $x' + y' < x + y$. So, $\varphi_{xy}(S_{x'y'}) = \z-- \in P$.
    \end{enumerate}

    Thus, we see that this instance is a conditionally valid witness to chain length, as $\varphi_{xy}(S_{xy}) = 000 \in Q \setminus P$, and $\varphi_{xy}(S_{x'y'}) \in P$ whenever $x' < x$ or $x' = x$ and $y' < y$. Because the instance has $\Omega(n^2)$ constraints, we then obtain that $\CL(P \mid Q, n) \ge \Omega(n^2)$.
\end{proof}

\subsection{Upper Bounding Non-Redundancy}\label{sec:CNRDUB}

Next, it remains to \emph{upper bound} the conditional non-redundancy of this predicate. For this, we have the following lemma:

\begin{lemma}\label{lem:conditionalUpperBound}
    For $P, Q$ as defined in \cref{eq:defOfPredicate}, and any integer $n$, $\NRD(P \mid Q, n) \leq 27 f^{(3)}(n, 6, 3)$.
\end{lemma}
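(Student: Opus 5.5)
The plan is to convert a non-redundant instance of $\CSP(P \mid Q)$ into a $3$-partite, $3$-uniform hypergraph that contains no $(6,3)$-configuration, losing a factor of at most $27 = 3^3$ along the way. Fix a non-redundant instance with constraints $S_1,\dots,S_m$ and witnesses $\varphi_i$. Since $Q\setminus P=\{\z\z\z\}$, we have $\varphi_i(S_i)=\z\z\z$ and $\varphi_i(S_j)\in P$ for $j\ne i$.

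\textbf{Step 1 (partite reduction).} Color each variable uniformly at random with a color in $\{1,2,3\}$. Keep only those constraints whose $k$-th coordinate receives color $k$ for every $k$. Each constraint survives with probability $1/27$, so some coloring keeps at least $m/27$ constraints. Any sub-instance of a non-redundant instance is non-redundant, because the same witnesses still work. After this step, each constraint uses three distinct variables from three disjoint classes $X,Y,Z$. Moreover, two constraints cannot be equal as tuples, since $\varphi_i$ cannot give $S_i$ the tuple $\z\z\z$ and an identical $S_j$ a tuple in $P$. So the surviving constraints form a $3$-partite $3$-uniform hypergraph $H$ on $n$ vertices. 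It remains to show that $H$ is $(6,3)$-free, which gives $m/27\le f^{(3)}(n,6,3)$.

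\textbf{Step 2 (two edges share at most one vertex).} Suppose $S_i$ and $S_j$ share a vertex in coordinate $k$. Then $\varphi_i(S_j)$ is a tuple of $P$ with a $\z$ in coordinate $k$. Inspecting $P$, the tuple is forced: it is $\z--$ if $k=1$, $-\z-$ if $k=2$, and $-+\z$ if $k=3$. No tuple of $P$ has two zeros, so two edges share at most one vertex.

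\textbf{Step 3 (three edges on at most six vertices).} Three edges have $9$ vertex incidences. If they span at most $6$ vertices, then by Step 2 the only remaining possibility is a ``triangle'': $S_1,S_2,S_3$ pairwise meet in single vertices $u\in S_1\cap S_2$, $v\in S_2\cap S_3$, $w\in S_1\cap S_3$. Because $H$ is $3$-partite, the coordinates $a,b,c$ of $u,v,w$ must be pairwise distinct, so $(a,b,c)$ is a permutation of $(1,2,3)$. By Step 2, each $\varphi_s$ determines the tuple it assigns to the other two edges completely, and in particular determines $\varphi_s$ on the shared vertices.

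The main obstacle is that a single assignment does not yield a contradiction. For example, with $a=1$ and $c=2$, $\varphi_1$ gives $S_2$ the tuple $\z--$ and $S_3$ the tuple $-\z-$, and both agree that $v$ receives $-$. So the argument must combine all three witnesses. My plan is a polynomial (sum-of-values) argument. I would look for a function $g\colon D^3\to\R$ of the form $g(t)=\sum_k g_k(t_k)$, or a low-degree polynomial on $\{-1,0,1\}^3$, identifying $-,\z,+$ with $-1,0,1$. It should satisfy $g(\z\z\z) > 0$ and $g\le 0$ on the three forced tuples $\z--$, $-\z-$, $-+\z$, with signs arranged so that adding up $\sum_s g(\varphi_s(S_s)) - \sum_{s\ne t} g(\varphi_s(S_t))$ over the triangle makes each vertex contribution cancel. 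That would give the contradiction $0<0$. If no uniform $g$ works, I would instead go through the six permutations $(a,b,c)$ directly (up to the symmetry of the triangle there are only two essentially different cyclic orientations). For each one, I would track the forced values of the private vertices and shared vertices under each $\varphi_s$, and find a vertex or edge whose values under the three assignments are mutually incompatible with membership in $P$. Once every triangle is excluded, $H$ is $(6,3)$-free and the stated bound follows.
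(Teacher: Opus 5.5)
Your Steps 1 and 2, and the reduction in Step 3 to a ``triangle'' whose three pairwise intersections lie in distinct classes, are correct and match the paper. The gap is that the triangle is never actually excluded, and the plan you give for excluding it rests on a false premise. You checked only $\varphi_1$ in your example and concluded that ``the argument must combine all three witnesses.'' No argument can combine them. The conditions $\varphi_s(S_s)=\z\z\z$ and $\varphi_s(S_t)\in P$ for different $s$ involve disjoint unknowns, since nothing relates $\varphi_s(v)$ to $\varphi_t(v)$. So the three edges are non-redundant if and only if each $\varphi_s$ exists on its own, and any contradiction must already occur inside a single witness. In your proposed sum $\sum_s g(\varphi_s(S_s))-\sum_{s\neq t} g(\varphi_s(S_t))$, vertex contributions can only cancel within a fixed $s$. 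Likewise, values ``under the three assignments'' can never be mutually incompatible.

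The missing idea is choosing the right witness. Under $\varphi_s$, the other two edges meet $S_s$ in coordinates $p\neq q$, and they meet each other at a shared vertex in the remaining coordinate $r$. By Step 2 their tuples are forced, and $\varphi_s$ is contradictory exactly when the two forced tuples disagree in coordinate $r$. The tuples $\z--$ and $-\z-$ agree in coordinate $3$, and $-\z-$ and $-+\z$ agree in coordinate $1$. However, $\z--$ and $-+\z$ disagree in coordinate $2$. So take $S_s$ to be the edge that does not contain the shared vertex of the second class. In your example ($a=1$, $b=3$, $c=2$) this is $S_2$: $\varphi_2(S_1)=\z--$ forces $\varphi_2(w)=-$, while $\varphi_2(S_3)=-+\z$ forces $\varphi_2(w)=+$. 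Every permutation $(a,b,c)$ places exactly one shared vertex in the second class, so this single-assignment argument rules out every triangle. This is precisely what the paper does in its two cases, using the witness of $e_2$ and of $e_1$ respectively.
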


As mentioned in the introduction, using known bounds on $f^{(3)}(n, 6, 3)$ gives the following:

\begin{corollary}\label{cor:UBonCNRD}
$\NRD(P \mid Q, n) \leq \frac{n^2}{2^{\Omega(\log^*(n))}}$.
\end{corollary}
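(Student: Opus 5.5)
The corollary is simply the composition of \cref{lem:conditionalUpperBound} with \cref{fact:boundonRS}, and I would prove it in one line. By \cref{lem:conditionalUpperBound}, for every $n$ we have $\NRD(P \mid Q, n) \leq 27 f^{(3)}(n, 6, 3)$. By \cref{fact:boundonRS}, there is an absolute constant $c > 0$ such that
\[
f^{(3)}(n, 6, 3) \leq \frac{n^2}{2^{c\log^*(n)}}.
\]
Chaining the two inequalities gives
\[
\NRD(P \mid Q, n) \leq \frac{27\, n^2}{2^{c \log^*(n)}}.
\]

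It remains to absorb the constant $27$ into the $\Omega(\cdot)$ in the exponent. Write $27 = 2^{\log_2 27}$, so the bound becomes
\[
\frac{n^2}{2^{c\log^*(n) - \log_2 27}}.
\]
Once $\log^*(n) \geq 2\log_2(27)/c$, we have $c\log^*(n) - \log_2 27 \geq (c/2)\log^*(n)$, so the bound is $n^2 / 2^{(c/2)\log^*(n)}$. For the finitely many smaller $n$ (those with $\log^*(n)$ bounded by a constant), the trivial bound $\NRD(P\mid Q,n) \leq n^3$ is at most a constant times $n^2$, since $n$ is itself bounded. We can therefore shrink the constant in the exponent once more to cover these cases, and the bound $n^2/2^{\Omega(\log^* n)}$ holds for all $n$.

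There is no real obstacle here. The only point needing care is the constant bookkeeping just described, since a multiplicative constant in front of $n^2/2^{\Omega(\log^* n)}$ cannot be dropped for free when $\log^* n$ grows so slowly. All the substantive work lies in \cref{lem:conditionalUpperBound}, which the corollary assumes.
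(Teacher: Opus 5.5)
Your argument is the same as the paper's: compose \cref{lem:conditionalUpperBound} with \cref{fact:boundonRS}. Your absorption of the factor $27$ into the exponent once $\log^*(n)\ge 2\log_2(27)/c$ is correct; the paper leaves that step implicit.

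The small-$n$ paragraph, however, contains an invalid step. Since $n^2/2^{c'\log^*(n)}\le n^2$ for every $c'>0$, shrinking the constant in the exponent can never accommodate a bound like $n^3$ or $Cn^2$ with $C>1$. So the trivial bound does not let you conclude the inequality for all $n$.

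The paragraph is unnecessary anyway, because the $\Omega(\cdot)$ statement only asserts the bound for sufficiently large $n$. If you do want it for every $n$, you need an estimate strictly below $n^2$ for each small $n\ge 2$, and then you can shrink $c'$. Such an estimate follows from the observation in the proof of \cref{lem:conditionalUpperBound}: under $\varphi_i$, no other constraint can have two positions holding variables of $S_i$. This gives at most about $n^2/6+n$ constraints.
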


\begin{proof}
    This follows from plugging in the bound of \cref{fact:boundonRS} to \cref{lem:conditionalUpperBound}.
\end{proof}

\begin{proof}[Proof of \cref{lem:conditionalUpperBound}]
    We consider any CSP instance $n$ variables and $m$-constraints which is non-redundant for $P \mid Q$. Consider a uniformly random $3$-coloring of the vertices $c : [n] \to [3]$. For each constraint $S_i \in [n]^3$ for $i \in [m]$, the probability that for all $i \in [3]$, the $i$th variable of $S_i$ maps to $i$ is precisely $1/3^3 = 1/27$. Thus, there exists a subset of at least $m/27$ constraints for which the underlying hypergraph $H \subseteq [n]^3$ is $3$-partite. We let $X, Y, Z \subseteq [n]$ denote the three variable groups of the tripartition. To prove \cref{lem:conditionalUpperBound}, it suffices to prove that $|H| \le f^{(3)}(n, 6, 3)$.

    Now, as per \cref{def:conditionalNRD}, in order for the CSP instance to be non-redundant for $R \mid Q$, it must be the case that for every $S_i \in H$, there exists $\varphi_i: X \cup Y \cup Z \rightarrow \{-, 0, +\}$ such that $\varphi_i(S_i) = 000$ (as $000$ is the only tuple in $Q \setminus P$), and $\varphi_i(S_j) \in R$ (for $j \neq i$). We now proceed to upper bounding the number of hyperedges (constraints) in $H$. To start, we show that $H$ is a \emph{linear hypergraph}, meaning that any two sets intersect in at most $1$ vertex.

    \begin{claim}
    Let $H$ be defined as above. Then, for every $e, e' \in H$, it must be that $|e \cap e'| \leq 1$. 
    \end{claim}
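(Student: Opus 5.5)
The claim should follow directly from the witnessing assignments, by checking which patterns of $\z$'s occur in $P$. Recall that $Q \setminus P = \{\z\z\z\}$. Since $H$ is $3$-partite with parts $X, Y, Z$, every hyperedge has the form $e = (x, y, z)$ with $x \in X$, $y \in Y$, $z \in Z$. Consequently two hyperedges can only share a vertex in the same coordinate. I would argue by contradiction: suppose $e = S_i$ and $e' = S_j$ with $i \neq j$ satisfy $|e \cap e'| \geq 2$, and look at the assignment $\varphi_i$. Non-redundancy gives $\varphi_i(S_i) = \z\z\z$, so $\varphi_i$ is $\z$ on all three vertices of $e$. It also requires $\varphi_i(S_j) \in P$.

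There are three cases, according to which two coordinates $e$ and $e'$ share. (The case where they share all three is also covered, since then $\varphi_i(S_j) = \z\z\z \notin P$.)
\begin{enumerate}
    \item If $e, e'$ share the $X$- and $Y$-vertices, then $\varphi_i(S_j)$ has the form $\z\z\ast$. No tuple of $P$ begins with $\z\z$.
    \item If they share the $X$- and $Z$-vertices, then $\varphi_i(S_j)$ has the form $\z\ast\z$. The only tuple of $P$ with first coordinate $\z$ is $\z--$, whose last coordinate is $-$ rather than $\z$.
    \item If they share the $Y$- and $Z$-vertices, then $\varphi_i(S_j)$ has the form $\ast\z\z$. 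The only tuple of $P$ with middle coordinate $\z$ is $-\z-$, again with last coordinate $-$.
\end{enumerate}
In each case $\varphi_i(S_j) \notin P$, contradicting non-redundancy, so $|e \cap e'| \le 1$.

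There is no real obstacle here; the only care needed is to use the $3$-partiteness to align the shared vertices coordinatewise, and to note that this also excludes repeated constraints. The substantive work of \cref{lem:conditionalUpperBound} comes afterwards: excluding $3$ hyperedges on $6$ vertices, presumably via a sign/polynomial argument such as the one in \cref{claim:poly}.
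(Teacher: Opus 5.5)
Your proof is correct and follows the paper's argument: take the witness $\varphi_e$ with $\varphi_e(e)=\z\z\z$ and observe that $\varphi_e(e')$ would then contain at least two $\z$'s, which would have to be a tuple of $P$. The paper states this in one line, since every tuple of $P$ has at most one $\z$ wherever the shared vertices sit; your coordinate-wise case split via $3$-partiteness is therefore unnecessary but harmless.
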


    \begin{proof}
        Indeed, suppose for the sake of contradiction that there are some $e, e'$ for which $|e \cap e'| \geq 2$. Then, there must be an assignment $\varphi_e$ such that $\varphi_e(e) = \z\z\z$, while $\varphi_e(e') \in P$. However, $\varphi_e(e')$ contains at least two $\z$'s, and there is no assignment in $R$ which has at least two $\z$'s.
    \end{proof}

    Using the above, we will show something \emph{even stronger}; namely that there is no collection of $6$ vertices whose induced hypergraph contains $3$ hyperedges. Towards proving this, we assume for the sake of contradiction that some collection of such $6$ vertices $V' \subseteq X \cup Y \cup Z$ exists. We immediately see that $|V' \cap X| = |V' \cap Y| = |V' \cap Z|$; indeed, if any $|V' \cap A| = 0$ for $A \in \{X, Y, Z\}$, then no hyperedges will be in the induced subgraph of $V'$, and if $|V' \cap A| = 1$ for some $A \in \{X, Y, Z\}$, then all three hyperedges would share that single vertex in $A$. This gives a contradiction as this would imply that the hyperedges \emph{do not intersect} in any other vertices (otherwise it violates the linearity property established above). But this is not possible: there are only $5$ other vertices in $V'$, yet there are $6$ remaining vertices to be incident to the 3 hyperedges.

    So, as mentioned, we now focus on the case where $|V' \cap X| = |V' \cap Y| = |V' \cap Z| = 2$. We label the vertices in $V'$ as $x_1, x_2, y_1, y_2, z_1, z_2$. We assume WLOG that one of the hyperedges defined on $V'$ is $e_1 = (x_1, y_1, z_1)$, and that a second hyperedge uses the vertex $x_1$. In particular, this then forces the second hyperedge to include vertices $e_2 = (x_1, y_2, z_2)$ by the linearity conditions above. We now have a few possible cases based on the final hyperedge. Note that this final hyperedge must use $x_2$:
    \begin{enumerate}
        \item Suppose that the final hyperedge touches vertex $y_1$. Then it is forced that $e_3 = (x_2, y_1, z_2)$, as otherwise $e_3$ would be incident to two vertices from $(x_1, y_1, z_1)$ which violates linearity. However, this forces a contradiction with the conditional non-redundancy. We know there must be an assignment $\varphi$ which gives $(x_1, y_2, z_2)$ values $\z\z\z$, while the other hyperedges are in $R$. Because $e_1$ also uses $x_1$, this means that $\varphi(e_1) = \z--$. Because $e_2$ uses $z_2$, this also forces that $\varphi(e_2) = -+\z$; however, these are not simultaneously possible, as under a single assignment, this requires that $y_1$ gets values $-$ and $+$.
        \item Otherwise, suppose that the final hyperedge $e_3$ touches vertex $y_2$. Then it is forced by linearity of $H$ that $e_3 = (x_2, y_2, z_1)$. However, this again forces a contradiction with the conditional non-redundancy. Indeed, there must be an assignment $\varphi$ such that $\varphi(e_1) = \z\z\z$, while also ensuring that $\varphi(e_2), \varphi(e_3) \in R$. However, because $e_1$ and $e_2$ share $x_1$, this forces that $\varphi(e_2) = \z--$. Likewise, because $e_1$ and $e_3$ share $z_1$, this forces that $\varphi(e_3) = -+\z$. But, this means that $\varphi$ assigns $y_2$ to both $+$ and $-$ which is not possible. Thus, this violates the conditional non-redundancy. 
    \end{enumerate}
    Because the cases are exhaustive (the final hyperedge must touch one of $y_1, y_2$), this means that $H$ cannot contain any induced hypergraph on $6$ vertices with $3$ hyperedges. By the definition of $f^{(3)}(n, 6, 3)$, this yields the desired lemma. 
\end{proof}

\subsection{Unconditional separation}

So far we have demonstrated a separation of \emph{conditional} non-redundancy and chain length. We now show that this separation can be translated into an ``unconditional'' predicate with only a constant-factor change in the separation. The trick for making this translation was introduced by \cite{brakensiek2025Richness} and was subsequently adapted by \cite{brakensiek2026classification,BGJLW26}. In short, define $P \mid Q$ as in \cref{eq:defOfPredicate} and let
\[
    R := (P \times \{\z\}) \cup (Q \times \{+\}) \subseteq \{-, \z, +\}^4.
\]
We claim the following bounds which proves \cref{thm:main}.
\begin{theorem}\label{thm:main-formal}
For any positive integer $n$,
\begin{align}
\NRD(R, n) &\leq O(n \cdot f^{(3)}(n, 6, 3)) \le \frac{n^3}{2^{\Omega(\log^*(n))}}\label{eq:NRD}\\
\CL(R, n) &\geq \Omega(n^3).\label{eq:CL}
\end{align}
\end{theorem}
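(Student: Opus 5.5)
The plan is to prove the two inequalities separately. The chain-length bound \eqref{eq:CL} comes from lifting the explicit chain of \cref{lem:CL-LB}. The non-redundancy bound \eqref{eq:NRD} comes from grouping constraints by their fourth variable and reducing each group to \cref{lem:conditionalUpperBound} plus a linear-size contribution.

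\textbf{Chain length.} I would take the instance from the proof of \cref{lem:CL-LB} on variable groups $X,Y,Z$ of total size $3n/4$, with constraints $S_{xy}$ and witnesses $\varphi_{xy}$. I would add a fresh group $W=\{w_1,\dots,w_{n/4}\}$ and form the constraints $(S_{xy},w_t)$ for all $x,y,t$, which gives $\Omega(n^3)$ constraints. They are ordered with $(x,y)$ lexicographic as the outer key and $t$ as the inner key.

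For $(S_{xy},w_t)$ the witness is $\varphi_{xy}$ on $X\cup Y\cup Z$, with $w_t\mapsto \z$ and every other $w_s\mapsto +$. Then:
\begin{itemize}
\item The constraint itself receives $\z\z\z\z\notin R$, since $\z\z\z\notin P$.
\item An earlier constraint $(S_{x'y'},w_s)$ with $(x',y')$ before $(x,y)$ receives a $P$-tuple followed by $\z$ or $+$. This lies in $R$ because $P\subseteq Q$.
\item An earlier constraint $(S_{xy},w_s)$ with $s<t$ receives $\z\z\z+$. This lies in $R$ since $\z\z\z\in Q$.
\end{itemize}
Constraints that come later are unconstrained, so this is a valid chain.

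\textbf{Non-redundancy.} Take a non-redundant instance of $\CSP(R)$ with witnesses $\varphi_i$. I would partition the constraints first by their fourth variable $w$, and then by the value $\varphi_i(w)\in\{-,\z,+\}$ together with the kind of violation. Inside a fixed class, the first-three-coordinate projections form a non-redundant instance of a 3-ary (conditional) predicate, using each member's own witness $\varphi_i$.

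\begin{itemize}
\item If $\varphi_i(w)=-$, every other constraint with fourth variable $w$ is violated, since no tuple of $R$ ends in $-$. So this class has at most one member per $w$, for $O(n)$ in total.
\item If $\varphi_i(w)=+$, every other member $j$ of the class satisfies $\varphi_i(w)=+$ as well. So member $j$ requires the projection of $S_j$ to lie in $Q$, while $S_i$'s projection lies outside $Q$. This class is non-redundant for $Q$.
\item If $\varphi_i(w)=\z$, then the other members' projections lie in $P$, and $S_i$'s projection is either in $Q\setminus P$ (a conditional $P\mid Q$ non-redundant instance) or outside $Q$ (a non-redundant instance for $P$).
\end{itemize}

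The per-$w$ count is therefore at most $O(1)+\NRD(Q,n)+\NRD(P,n)+\NRD(P\mid Q,n)$. Summing over the $n$ choices of $w$ and invoking \cref{lem:conditionalUpperBound} and \cref{fact:boundonRS} gives \eqref{eq:NRD}, provided $\NRD(P,n),\NRD(Q,n)=O(n)$. One caveat needs care: the instance for $P\mid Q$ must be on variables disjoint from $w$. A constraint whose first three coordinates contain $w$ only shifts constants, and handling this is routine bookkeeping.

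\textbf{Main obstacle.} The key remaining step is showing $\NRD(P,n)=O(n)$ and $\NRD(Q,n)=O(n)$. I would first try the standard polynomial method. For each tuple $t\notin Q$ (resp.\ $t\notin P$), I would look for a polynomial over a suitable field with $\{-,\z,+\}$ identified with $\{-1,0,1\}$, which vanishes on all of $Q$ (resp.\ $P$) but not at $t$, and which is \emph{linear} in the variables in an appropriate sense. Examples are a sum of univariate functions, or a product form like $(1+a)\cdot g(b,c)$ where the factor $1+a$ kills the many tuples with first coordinate $-$; this is presumably the polynomial of \cref{claim:poly}. A family of such polynomials spanning a space of dimension $O(n)$ bounds the non-redundancy by $O(n)$. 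If a single linear embedding fails for some non-tuples, I would fall back on a further case split by which coordinate of the falsified tuple is non-$-$, using the fact that $P$ and $Q$ force the first coordinate to be $-$ except on $\z--$ and $\z\z\z$.
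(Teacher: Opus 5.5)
\textbf{Chain length.} Your chain-length construction is correct, and your ordering is the one that works: the chain index $(x,y)$ is the outer key and the fourth variable the inner key. Every constraint preceding $(S_{xy},w_t)$ falls into one of two kinds:
\begin{itemize}
\item it has an earlier $(x',y')$, so it receives a $P$-tuple followed by $\z$ or $+$;
\item it is $(S_{xy},w_s)$ with $s<t$, so it receives $\z\z\z+$.
\end{itemize}
The paper's enumeration ($k=\lceil i/m\rceil$) makes the fourth variable the outer key. Then the constraints $(S_{x'y'},w_s)$ with $s<t$ and $x'>x$ precede $(S_{xy},w_t)$. Under the witnesses of \cref{lem:CL-LB} they receive a leading $+$, and no tuple of $R$ starts with $+$. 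Your order avoids this problem.

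\textbf{Non-redundancy: the gap.} The step you defer, $\NRD(P,n),\NRD(Q,n)=O(n)$, is false. Take one variable $u$ and variables $y_1,\dots,y_k,z_1,\dots,z_k$ with $k=\lfloor (n-1)/2\rfloor$, and all constraints $(u,y_a,z_b)$. For $(u,y_a,z_b)$ use the assignment
\begin{itemize}
\item $u\mapsto -$;
\item $y_a\mapsto\z$, and $y_{a'}\mapsto +$ for $a'\neq a$;
\item $z_b\mapsto\z$, and $z_{b'}\mapsto -$ for $b'\neq b$.
\end{itemize}
Every constraint then receives $(-,\beta,\gamma)$ with $\beta\in\{\z,+\}$ and $\gamma\in\{\z,-\}$. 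Of these four tuples, only $-\z\z$ lies outside $P$, and it also lies outside $Q$. Hence $\NRD(P,n)$ and $\NRD(Q,n)$ are both at least $k^2=\Omega(n^2)$. Your per-$w$ sum therefore only yields $\NRD(R,n)\le O(n^3)$.

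No choice of polynomial can repair this. Even the weaker requirement $\NRD(Q,n)=O(f^{(3)}(n,6,3))$ fails, because $f^{(3)}(n,6,3)=o(n^2)$. A form like $(1+a)\,g(b,c)$ is not linear, and the polynomial of \cref{claim:poly} is genuinely quadratic.

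\textbf{Where the loss comes from, and the fix.} You split by $w$ the constraints that are falsified outside $Q$. That discards the fact that $\varphi_i$ must also satisfy the constraints with other fourth variables. The missing idea is to handle these constraints globally.
\begin{itemize}
\item Since $R\subseteq Q\times\{\z,+\}$, consider every $i$ with $\varphi_i(S_i)\notin Q\times\{\z,+\}$. This covers your $\varphi_i(w)=-$ class and both ``projection outside $Q$'' classes. Each such $i$ has $\varphi_i(S_j)\in Q\times\{\z,+\}$ for all $j\neq i$.
\item So all these constraints, across all $w$ at once, form a single non-redundant instance of $\CSP(Q\times\{\z,+\})$. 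By the quadratic $q$ of \cref{claim:poly}, this contributes only $O(n^2)$.
\item Only the constraints with $\varphi_i(S_i)\in(Q\setminus P)\times\{\z\}$ need your per-$w$ split. Each such group is a non-redundant $P\mid Q$ instance.
\end{itemize}
This gives $\NRD(R,n)\le 27\,n\,f^{(3)}(n,6,3)+O(n^2)$. The $O(n^2)$ term is absorbed because a matching shows $f^{(3)}(n,6,3)\ge\lfloor n/3\rfloor$. This is exactly the inequality the paper imports from \cite{brakensiek2026classification}.
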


Although the methods of \cite{brakensiek2025Richness,brakensiek2026classification,BGJLW26} have not been extended to chain length, the same general strategies still apply, which we prove in detail below in \cref{subsec:CL}.

\subsubsection{Proof of \cref{eq:NRD}}\label{subsec:NRD}

To prove \cref{eq:NRD}, we first make use of an inequality due to \cite{brakensiek2026classification}.\footnote{They only state the inequality for arity $3$ predicates, but the same proof goes through nearly verbatim for any arity.}

\begin{fact}[Proposition 2.4~\cite{brakensiek2026classification}, adapted]
For any $P \subseteq Q \subseteq D^3$ and any distinct $a, b \in D$, we have that
\begin{align}
    \NRD((P \cup \{a\}) \cup (Q \cup \{b\})) \le \NRD(P \mid Q, n) \cdot O(n) + \NRD(Q \times \{0,1\}).\label{eq:R-bound}
\end{align}
\end{fact}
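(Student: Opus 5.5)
The plan is a partition argument: split the constraints of a non-redundant instance according to \emph{how} each witnessing assignment violates its own constraint, and bound each class separately. (I read the relation in \cref{eq:R-bound} as $R' := (P \times \{a\}) \cup (Q \times \{b\})$ on $n$ variables.)

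Fix a non-redundant instance of $\CSP(R')$ on $n$ variables with constraints $S_1,\dots,S_m$, where $S_i = (u_i, v_i, t_i, w_i)$. Fix witnesses $\varphi_i$ with $\varphi_i(S_i)\notin R'$ and $\varphi_i(S_j)\in R'$ for all $j\neq i$. I will first show that every $i$ falls into exactly one of two classes:
\begin{itemize}
\item[(A)] $\varphi_i(u_i,v_i,t_i)\in Q\setminus P$ and $\varphi_i(w_i)=a$;
\item[(B)] $\varphi_i(S_i)\notin Q\times\{a,b\}$, i.e. either the first three coordinates lie outside $Q$, or $\varphi_i(w_i)\notin\{a,b\}$.
\end{itemize}
This is a short case check using $P\subseteq Q$. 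If $\varphi_i(S_i)$ lies in $Q\times\{a,b\}$ but not in $R'$, then the last coordinate cannot be $b$, since $Q\times\{b\}\subseteq R'$. So the last coordinate is $a$. The first three coordinates then cannot lie in $P$, since $P\times\{a\}\subseteq R'$. Hence they lie in $Q\setminus P$, which is class (A). Every other violating tuple is in class (B).

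\textbf{Bounding class (B).} Since $R'\subseteq Q\times\{a,b\}$, each $\varphi_i$ with $i$ in class (B) satisfies every other constraint of the subinstance with respect to $Q\times\{a,b\}$, and violates $S_i$ with respect to it. So the class (B) constraints form a non-redundant instance of $\CSP(Q\times\{a,b\})$. Here the domain is still $D$, but the relation only uses $\{a,b\}$ in the last coordinate. I expect to identify this with $Q\times\{0,1\}$ by a standard domain-renaming/collapsing argument: map $a\mapsto 0$, $b\mapsto 1$ and all other values of $D$ to a fresh "bad" value. This is the step requiring the most care, because a variable may occur both in a fourth position and in one of the first three positions. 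If needed, I would define $\NRD(Q\times\{0,1\})$ with the ambient domain $D$ so that the bound holds as stated. This class therefore contributes at most $\NRD(Q\times\{0,1\},n)$.

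\textbf{Bounding class (A).} Group the class (A) constraints by their fourth variable $w$; there are at most $n$ groups. Within one group, consider the instance of $\CSP(P\mid Q)$ formed by the first three coordinates. For $i$ in the group, $\varphi_i(w)=a$. Every other constraint $S_j$ in the group also has fourth variable $w$, so $\varphi_i(S_j)=(\cdot,\cdot,\cdot,a)\in R'$. This forces its first three coordinates into $P$, while $S_i$ itself gets a tuple in $Q\setminus P$. This is exactly \cref{def:conditionalNRD}. One caveat: two constraints in the same group with identical first three variables are impossible, since they would receive the same tuple. So each group has at most $\NRD(P\mid Q,n)$ constraints, and class (A) has at most $n\cdot\NRD(P\mid Q,n)$ constraints in total.

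Adding the two bounds gives \cref{eq:R-bound}. The only nonroutine point is the domain identification in class (B); the class (A) grouping is the heart of the argument and is straightforward once the fourth variable is fixed.
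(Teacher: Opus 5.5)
Your argument is correct and is essentially the standard proof behind the cited Proposition~2.4 of \cite{brakensiek2026classification}, which the paper invokes without reproducing. Witnesses landing outside $Q\times\{a,b\}$ yield a non-redundant instance of $Q\times\{a,b\}$, and the remaining witnesses land in $(Q\setminus P)\times\{a\}$, so grouping those constraints by their fourth variable yields at most $n$ non-redundant instances of $P\mid Q$. Reading the last term as $\NRD(Q\times\{a,b\},n)$ over the ambient domain $D$ is exactly how the paper uses it, since it next writes $\NRD(Q\times\{\z,+\},n)$; you can therefore drop the collapsing map, which, as you suspected, fails when a variable occupies both a fourth position and one of the first three.
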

Using \cref{eq:R-bound} and \cref{lem:conditionalUpperBound}, we immediately get that
\[
    \NRD(R, n) \le O(n \cdot f^{(3)}(n, 6, 3)) + \NRD(Q \times \{\z,+\}, n)
\]
Since $n \cdot f^{(3)}(n, 6, 3) = n^{3-o(1)}$, it suffices to prove that $\NRD(Q \times \{\z,+\}, n) = O(n^2)$ to establish \cref{eq:NRD}. To do this, we use an established technique~\cite{chen2020BestCase, lagerkvist2020Sparsification, khanna2025efficient, brakensiek2025Richness, brakensiek2026classification,SV26b} of showing that $Q \times \{0,1\}$ corresponds to the locus of a quadratic polynomial (over some Abelian group). In particular, consider the quadratic polynomial $q : \R^4 \to \R$ defined by
\[
    q(x_1, x_2, x_3, x_4) = -x_1^2 + 2x_1x_2 - x_1x_3 + x_2x_3 + 10x_4^2 + 3x_2 - 2x_3 - 10x_4.
\]
Let $\sigma : \{-, \z, +\} \to \R$ be the map which sends $\sigma(-) = -1, \sigma(\z) = 0, \sigma(1) = +$, then we claim the following.

\begin{claim}\label{claim:poly}
For all $t \in \{-, \z, +\}^4$, we have that $q(\sigma(t)) = 0$ iff $t \in Q \times \{0,1\}$.
\end{claim}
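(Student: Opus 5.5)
The plan is a direct finite verification, organized so that it never needs all $81$ points checked one by one. I read the statement with $\{0,1\}$ meaning $\{\z,+\}$ and with $\sigma(+) = 1$; this is the intended reading given $R$ and \cref{eq:R-bound}. First, split $q$ as
\[
q(x_1,x_2,x_3,x_4) = p(x_1,x_2,x_3) + 10\,x_4(x_4-1),
\qquad
p(x_1,x_2,x_3) := -x_1^2 + 2x_1x_2 - x_1x_3 + x_2x_3 + 3x_2 - 2x_3 .
\]
The term $10x_4(x_4-1)$ vanishes for $x_4 \in \{0,1\}$, which are the images of $\z$ and $+$. It equals $20$ for $x_4 = -1$, the image of $-$.

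\textbf{The case $x_4 = -1$.} Here it suffices to show $p \neq -20$ on $\{-1,0,1\}^3$. The triangle inequality gives $|p| \le 1+2+1+1+3+2 = 10$, so $q \neq 0$ there. This matches the fact that no tuple of $Q \times \{\z,+\}$ has last coordinate $-$.

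\textbf{The case $x_4 \in \{0,1\}$.} It remains to show that the zero set of $p$ on $\{-1,0,1\}^3$ is exactly $\sigma(Q)$. I would split on the value of $x_1$.
\begin{itemize}
\item For $x_1 = -1$, the polynomial simplifies to $p = -1 + x_2 - x_3 + x_2x_3 = (x_2-1)(x_3+1)$. It vanishes iff $x_2 = 1$ or $x_3 = -1$. This gives exactly the five tuples $---, -\z-, -+-, -+\z, -++$, which are the elements of $Q$ with first coordinate $-$.
\item For $x_1 = 0$, we get $p = x_2x_3 + 3x_2 - 2x_3$. A nine-case check shows the only zeros are $(x_2,x_3) = (-1,-1)$ and $(0,0)$. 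These correspond to $\z--$ and $\z\z\z$.
\item For $x_1 = 1$, we get $p = -1 + 5x_2 - 3x_3 + x_2x_3$. Fixing $x_2$ leaves a linear function of $x_3$:
  \begin{itemize}
  \item $x_2=-1$ gives $-6-4x_3 \in \{-2,-6,-10\}$;
  \item $x_2=0$ gives $-1-3x_3 \in \{2,-1,-4\}$;
  \item $x_2=1$ gives $4-2x_3 \in \{6,4,2\}$.
  \end{itemize}
  None of these values is zero, consistent with $Q$ having no tuple starting with $+$.
\end{itemize}

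Combining the cases, $q(\sigma(t)) = 0$ iff $t_4 \in \{\z,+\}$ and $(t_1,t_2,t_3) \in Q$, which is the claim. The only real obstacle is the bookkeeping. The factorization in the $x_1 = -1$ slice is what makes the check clean. The main danger is an arithmetic slip in the $x_1 = 0$ slice, where $Q$ has an "isolated" zero $\z\z\z$ besides $\z--$, so I would recheck those nine values explicitly.
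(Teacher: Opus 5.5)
Your proposal is correct and follows essentially the same route as the paper. Both split $q = p + 10x_4(x_4-1)$, bound $|p|$ to rule out $x_4=-1$, and then check $\{-1,0,1\}^3$ case by case; you organize the check by $x_1$ using the factorization $p(-1,x_2,x_3)=(x_2-1)(x_3+1)$, while the paper splits by $(x_1,x_2)$. Your triangle-inequality bound $|p|\le 10$ is actually the accurate one: the paper's stated $|p|\le 9$ fails at $p(1,-1,1)=-10$, though either bound suffices since the $x_4=-1$ term contributes $20$.
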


\begin{proof}
Such a fact can be checked with brute-force computation. To give a bit more insight, let $p(x_1, x_2, x_3) = -x_1^2 + 2x_1x_2 - x_1x_3 + x_2x_3 + 3x_2 - 2x_3$ so that $q(x_1,  x_2, x_3, x_4) = p(x_1, x_2, x_3) + 10x_4(x_4-1)$. Note that for all $(x_1, x_2, x_3) \in \{-, \z, +\}^3$, we have that $|p(x_1, x_2, x_3)| \le 9$. Thus, $q(x_1,  x_2, x_3, x_4) = 0$ if and only if $p(x_1, x_2, x_3) = 0$ and $x_4 \in \{0, 1\}$. 

Thus, it suffices to verify that $p(\sigma(t)) = 0$ for $t \in \{-, \z, +\}^3$ if and only if $t \in Q$. That is, for $(x_1, x_2, x_3) \in \{-1, 0, 1\}^3$, we seek to understand when $p(x_1, x_2, x_3) = 0$. We break into cases.
\begin{itemize}
\item $(x_1, x_2) = (-1, -1)$. Note that $p(-1, -1, x_3) = -2 - 2x_3$ which is zero precisely when $(x_1, x_2, x_3) = (-1, -1, -1)$, which corresponds to a tuple of $Q$.
\item $(x_1, x_2) = (-1, 0)$. Note that $p(-1, 0, x_3) = -1 - x_3$ which is zero precisely when $(x_1, x_2, x_3) = (-1, 0, -1)$, which corresponds to a tuple of $Q$.
\item $(x_1, x_2) = (-1, 1)$. Note that $p(-1, 1, x_3) = 0$. Here, $(-1, 1, -1)$, $(-1, 1, 0)$, and $(-1, 1, 1)$ all correspond to tuples of $Q$.
\item $(x_1, x_2) = (0, -1)$. Note that $p(0, -1, x_3) = -3 - 3x_3$ which is zero precisely when $(x_1, x_2, x_3) = (0, -1, -1)$ which corresponds to a tuple of $Q$.
\item $(x_1, x_2) = (0, 0)$. Note that $p(0, 0, x_3) = -2x_3$ which is zero precisely when $(x_1, x_2, x_3) = (0, 0, 0)$ which corresponds to the last tuple of $Q$.
\item $(x_1, x_2) = (0, 1)$. Note that $p(0, 1, x_3) = 3-x_3$, which has no roots in $\{-1, 0, 1\}$.
\item $(x_1, x_2) = (1, -1)$. Note that $p(1, -1, x_3) = -6-4x_3$, which has no roots in $\{-1, 0, 1\}$.
\item $(x_1, x_2) = (1, 0)$. Note that $p(1, 0, x_3) = -1-3x_3$, which has no roots in $\{-1, 0, 1\}$.
\item $(x_1, x_2) = (1, 1)$. Note that $p(1, 1, x_3) = 4-2x_3$, which has no roots in $\{-1, 0, 1\}$.
\end{itemize}
This completes the proof.
\end{proof}

By these aforementioned works, \cref{claim:poly} (e.g., Theorem 10 of \cite{lagerkvist2020Sparsification}, see also Corollary 4.12 of \cite{chen2020BestCase}) shows that $\NRD(Q \times \{\z,+\}, n) \le O(n^2)$.

\subsubsection{Proof of \cref{eq:CL}}\label{subsec:CL}

To prove \cref{eq:CL}, it suffices to give an explicit chain of $\Omega(n^3)$ clauses on $n$ variables. Assume without loss of generality that $n$ is even. Let $m = \NRD(R, n/2)$ and recall by \cref{lem:CL-LB} that $m = \Omega(n^2)$. Let $V_1$ and $V_2$ each be sets of size $n/2$. Let $e_1, \hdots, e_m \in V_1^3$ be a set of $m$ edges found by \cref{lem:CL-LB} which form a valid witness to chain length of $P \mid Q$. That is, for all $i \in [m]$, there exists a map $f_i : V_1 \to \{-, \z, +\}$ such that $f_i(e_j) \in P$ for all $j < i$ and $f_i(e_i) \in Q \setminus P$.

Assume that $V_2 = \{x_1, \hdots, x_{n/2}\}$. For $i \in [mn/2]$, define $e'_i \in (V_1 \cup V_2)^4$ to be $(e_{j,1}, e_{j,2}, e_{j,3}, x_k)$ where $k = \lceil i / m\rceil$ and $j \in [m]$ such that $j \equiv i \mod m$. We claim that $e'_1, \hdots, e'_{mn/2}$ form a valid witness to the chain length of $R$. In particular, $\CL(R, n) \ge mn/2 = \Omega(n^3)$, which confirms \cref{eq:CL}.
 
To see why, for all $i \in [mn/2]$, with  $k = \lceil i / m\rceil$ and $j \in [m]$ such that $j \equiv i \mod m$, we define $g_i : V_1 \cup V_2 \to \{-, \z, +\}$ such that $g_i(v) = f_j(v)$ for all $v \in V_1$ and $g_i(x_\ell) = +$ if $\ell < i$ and $g_i(x_\ell) = \z$ otherwise. To see why this is a valid certificate, first observe that $g_i(e'_i) = (f_j(e_{j,1}), f_j(e_{j,2}), f_j(e_{j,3}), \z) \in (Q \setminus P) \times \{\z\} \subsetneq R$. Further, for any $i' < i$, let $k' = \lceil i'/m\rceil$ and $j' \in [m]$ such that $j' \equiv i' \mod m$. Then,
\[
    g_i(e'_{i'}) = (f_j(e_{j',1}), f_j(e_{j',2}), f_j(e_{j',3}), g_i(x_{k'})) \subseteq Q \times \{\z, +\}. 
\]
If $k' < k$ then $g_i(x_{k'}) = +$, so $g_i(e'_{i'}) \subseteq Q \times \{+\} \subseteq R$. Otherwise, if $k' = k$, then $j' < j$ so $(f_j(e_{j',1}), f_j(e_{j',2}), f_j(e_{j',3})) \in P$ due to the fact that $e_1, \hdots, e_m$ is a valid witness to chain length of $P \mid Q$.  Thus, $g_i(e'_{i'}) \subseteq P \times \{\z, +\} \subseteq R$. Thus, we have exhibited a valid witness to chain length of $R$ of cubic length.

\section{Conclusion}

In this paper, we gave a the first example of a CSP predicate for which its non-redundancy and chain length have a super-constant multiplicative separation. We hope that this work spurs further research into finding possible separations between non-redundancy and chain length. In particular, we conjecture that there exists predicates which have \emph{polynomial} separation between chain length and non-redundancy (see a similar prediction in \cite{brakensiek2025redundancy}).

\begin{conjecture}\label{conj:separation}
There exists a finite domain $D$, an arity $r$, and non-trivial predicate $R \subseteq D^r$ such that there exists a constant $\eps > 0$ for which
\[
\frac{\CL(R, n)}{\NRD(R, n)} = \Omega(n^{\eps}).
\]
\end{conjecture}

Building on \cref{thm:main}, one possible route to proving \cref{conj:separation} is to find an ``asymmetric'' hypergraph Tur{\'a}n problem whose asymmetric structure can be exploited to build an unexpectedly long chain. Sadly, due to the dearth of known bounds for degenerate hypergraph Tur{\'a}n problems (e.g., \cite{MR776819}), novel extremal combinatorics techniques may be needed to execute such an approach.

\bibliographystyle{alpha}
\bibliography{references}

\newcommand{\etalchar}[1]{$^{#1}$}
\begin{thebibliography}{KKMO07}

\bibitem[BCH{\etalchar{+}}13]{bessiere2013constraint}
Christian Bessiere, Remi Coletta, Emmanuel Hebrard, George Katsirelos, Nadjib
  Lazaar, Nina Narodytska, Claude{-}Guy Quimper, and Toby Walsh.
\newblock Constraint acquisition via partial queries.
\newblock In Francesca Rossi, editor, {\em {IJCAI} 2013, Proceedings of the
  23rd International Joint Conference on Artificial Intelligence, Beijing,
  China, August 3-9, 2013}, pages 475--481. {IJCAI/AAAI}, 2013.

\bibitem[BCK20]{bessiere2020Chain}
Christian Bessiere, Cl{\'e}ment Carbonnel, and George Katsirelos.
\newblock Chain {{Length}} and {{CSPs Learnable}} with {{Few Queries}}.
\newblock {\em Proceedings of the AAAI Conference on Artificial Intelligence},
  34(02):1420--1427, April 2020.

\bibitem[BG25]{brakensiek2025redundancy}
Joshua Brakensiek and Venkatesan Guruswami.
\newblock Redundancy is all you need.
\newblock In Michal Kouck{\'{y}} and Nikhil Bansal, editors, {\em Proceedings
  of the 57th Annual {ACM} Symposium on Theory of Computing, {STOC} 2025,
  Prague, Czechia, June 23-27, 2025}, pages 1614--1625. {ACM}, 2025.

\bibitem[BGJ{\etalchar{+}}25]{brakensiek2025Richness}
Joshua Brakensiek, Venkatesan Guruswami, Bart~MP Jansen, Victor Lagerkvist, and
  Magnus Wahlstr{\"o}m.
\newblock The richness of {CSP} non-redundancy.
\newblock {\em arXiv preprint arXiv:2507.07942}, 2025.

\bibitem[BGJ{\etalchar{+}}26]{BGJLW26}
Joshua Brakensiek, Venkatesan Guruswami, Bart~MP Jansen, Victor Lagerkvist, and
  Magnus Wahlstr{\"o}m.
\newblock Super-linear lower bounds for csp non-redundancy via shrinking
  instances.
\newblock {\em arXiv preprint arXiv:2605.19055}, 2026.

\bibitem[BGP25]{brakensiek2025tight}
Joshua Brakensiek, Venkatesan Guruswami, and Aaron Putterman.
\newblock Tight bounds for sparsifying random csps.
\newblock {\em arXiv preprint arXiv:2508.13345}, 2025.

\bibitem[BGP26a]{brakensiek2026classification}
Joshua Brakensiek, Venkatesan Guruswami, and Aaron Putterman.
\newblock Classification of non-redundancy of boolean predicates of arity 4.
\newblock In Nicolas Beldiceanu, editor, {\em 32nd International Conference on
  Principles and Practice of Constraint Programming, {CP} 2026, Lisbon,
  Portugal, July 20-23, 2026}, volume 379 of {\em LIPIcs}, pages 8:1--8:24.
  Schloss Dagstuhl - Leibniz-Zentrum f{\"{u}}r Informatik, 2026.

\bibitem[BGP26b]{BrakensiekGP26}
Joshua Brakensiek, Venkatesan Guruswami, and Aaron Putterman.
\newblock Multiplicative error set system sparsification: {A} simpler proof via
  chain length contraction.
\newblock In Sayan Bhattacharya, Danupon Nanongkai, Michael Benedikt, and
  Gabriele Puppis, editors, {\em 53rd International Colloquium on Automata,
  Languages, and Programming, {ICALP} 2026, Royal Holloway, University of
  London, Egham, United Kingdom, July 7-10, 2026}, volume 374 of {\em LIPIcs},
  pages 44:1--44:17. Schloss Dagstuhl - Leibniz-Zentrum f{\"{u}}r Informatik,
  2026.

\bibitem[BHPZ26]{BrakensiekHPZ26}
Joshua Brakensiek, Neng Huang, Aaron Potechin, and Uri Zwick.
\newblock Separating {MAX} 2-and, {MAX} di-cut, and {MAX} {CUT}.
\newblock {\em {SIAM} J. Comput.}, 55(3):S23--268, 2026.

\bibitem[BJK05]{bulatov2005classifying}
Andrei Bulatov, Peter Jeavons, and Andrei Krokhin.
\newblock Classifying the complexity of constraints using finite algebras.
\newblock {\em SIAM journal on computing}, 34(3):720--742, 2005.

\bibitem[Bul17]{bulatov2017dichotomy}
Andrei~A Bulatov.
\newblock A dichotomy theorem for nonuniform {CSP}s.
\newblock In {\em 2017 IEEE 58th Annual Symposium on Foundations of Computer
  Science (FOCS)}, pages 319--330. IEEE, 2017.

\bibitem[B{\v Z}20]{BZ20}
Silvia Butti and Stanislav {\v Z}ivn{\'y}.
\newblock Sparsification of {{Binary CSPs}}.
\newblock {\em SIAM Journal on Discrete Mathematics}, 34(1):825--842, January
  2020.

\bibitem[Car22]{carbonnel2022Redundancy}
Cl{\'e}ment Carbonnel.
\newblock On {{Redundancy}} in {{Constraint Satisfaction Problems}}.
\newblock In {\em {{DROPS-IDN}}/v2/Document/10.4230/{{LIPIcs}}.{{CP}}.2022.11}.
  Schloss Dagstuhl -- Leibniz-Zentrum f{\"u}r Informatik, 2022.

\bibitem[CJP20]{chen2020BestCase}
Hubie Chen, Bart M.~P. Jansen, and Astrid Pieterse.
\newblock Best-{{Case}} and {{Worst-Case Sparsifiability}} of {{Boolean CSPs}}.
\newblock {\em Algorithmica}, 82(8):2200--2242, August 2020.

\bibitem[CKS01]{creignou2001complexity}
Nadia Creignou, Sanjeev Khanna, and Madhu Sudan.
\newblock {\em Complexity classifications of Boolean constraint satisfaction
  problems}.
\newblock SIAM, 2001.

\bibitem[FK17]{FK17}
Arnold Filtser and Robert Krauthgamer.
\newblock Sparsification of two-variable valued constraint satisfaction
  problems.
\newblock {\em {SIAM} J. Discret. Math.}, 31(2):1263--1276, 2017.

\bibitem[Fox11]{fox2011new}
Jacob Fox.
\newblock A new proof of the graph removal lemma.
\newblock {\em Annals of Mathematics}, pages 561--579, 2011.

\bibitem[FV98]{FederV98}
Tom{\'{a}}s Feder and Moshe~Y. Vardi.
\newblock The computational structure of monotone monadic {SNP} and constraint
  satisfaction: {A} study through datalog and group theory.
\newblock {\em {SIAM} J. Comput.}, 28(1):57--104, 1998.

\bibitem[GW95]{GoemansW95}
Michel~X. Goemans and David~P. Williamson.
\newblock Improved approximation algorithms for maximum cut and satisfiability
  problems using semidefinite programming.
\newblock {\em J. {ACM}}, 42(6):1115--1145, 1995.

\bibitem[H{\aa}s01]{Hastad01}
Johan H{\aa}stad.
\newblock Some optimal inapproximability results.
\newblock {\em J. {ACM}}, 48(4):798--859, 2001.

\bibitem[Hav26]{Haviv26}
Ishay Haviv.
\newblock Kernelization bounds for constrained coloring.
\newblock In Michal Kouck{\'{y}} and Daniela Petrisan, editors, {\em 51st
  International Symposium on Mathematical Foundations of Computer Science,
  {MFCS} 2026, Paris, France, August 24-28, 2026}, volume 386 of {\em LIPIcs},
  pages 53:1--53:14. Schloss Dagstuhl - Leibniz-Zentrum f{\"{u}}r Informatik,
  2026.

\bibitem[JP19]{jansen2019optimal}
Bart~MP Jansen and Astrid Pieterse.
\newblock Optimal sparsification for some binary {CSP}s using low-degree
  polynomials.
\newblock {\em ACM Transactions on Computation Theory (TOCT)}, 11(4):1--26,
  2019.

\bibitem[Kho02]{Khot02}
Subhash Khot.
\newblock On the power of unique 2-prover 1-round games.
\newblock In {\em Proceedings of the 17th Annual {IEEE} Conference on
  Computational Complexity, Montr{\'{e}}al, Qu{\'{e}}bec, Canada, May 21-24,
  2002}, page~25. {IEEE} Computer Society, 2002.

\bibitem[KK15]{KK15}
Dmitry Kogan and Robert Krauthgamer.
\newblock Sketching cuts in graphs and hypergraphs.
\newblock In Tim Roughgarden, editor, {\em Proceedings of the 2015 Conference
  on Innovations in Theoretical Computer Science, {ITCS} 2015, Rehovot, Israel,
  January 11-13, 2015}, pages 367--376. {ACM}, 2015.

\bibitem[KKMO07]{KhotKMO07}
Subhash Khot, Guy Kindler, Elchanan Mossel, and Ryan O'Donnell.
\newblock Optimal inapproximability results for {MAX-CUT} and other 2-variable
  csps?
\newblock {\em {SIAM} J. Comput.}, 37(1):319--357, 2007.

\bibitem[KMS18]{KhotMS18}
Subhash Khot, Dor Minzer, and Muli Safra.
\newblock Pseudorandom sets in grassmann graph have near-perfect expansion.
\newblock In Mikkel Thorup, editor, {\em 59th {IEEE} Annual Symposium on
  Foundations of Computer Science, {FOCS} 2018, Paris, France, October 7-9,
  2018}, pages 592--601. {IEEE} Computer Society, 2018.

\bibitem[KPS24]{KPS24}
Sanjeev Khanna, Aaron~(Louie) Putterman, and Madhu Sudan.
\newblock Code sparsification and its applications.
\newblock In David~P. Woodruff, editor, {\em Proceedings of the 2024 {ACM-SIAM}
  Symposium on Discrete Algorithms, {SODA} 2024, Alexandria, VA, USA, January
  7-10, 2024}, pages 5145--5168. {SIAM}, 2024.

\bibitem[KPS25]{khanna2025efficient}
Sanjeev Khanna, Aaron Putterman, and Madhu Sudan.
\newblock Efficient algorithms and new characterizations for {CSP}
  sparsification.
\newblock In Michal Kouck{\'{y}} and Nikhil Bansal, editors, {\em Proceedings
  of the 57th Annual {ACM} Symposium on Theory of Computing, {STOC} 2025,
  Prague, Czechia, June 23-27, 2025}, pages 407--416. {ACM}, 2025.

\bibitem[KSTW01]{khanna2001approximability}
Sanjeev Khanna, Madhu Sudan, Luca Trevisan, and David~P Williamson.
\newblock The approximability of constraint satisfaction problems.
\newblock {\em SIAM Journal on Computing}, 30(6):1863--1920, 2001.

\bibitem[LW20]{lagerkvist2020Sparsification}
Victor Lagerkvist and Magnus Wahlstr{\"o}m.
\newblock Sparsification of {{SAT}} and {{CSP Problems}} via {{Tractable
  Extensions}}.
\newblock {\em ACM Transactions on Computation Theory}, 12(2):1--29, June 2020.

\bibitem[MM17]{MakarychevM17}
Konstantin Makarychev and Yury Makarychev.
\newblock Approximation algorithms for csps.
\newblock In Andrei~A. Krokhin and Stanislav Zivn{\'{y}}, editors, {\em The
  Constraint Satisfaction Problem: Complexity and Approximability}, volume~7 of
  {\em Dagstuhl Follow-Ups}, pages 287--325. Schloss Dagstuhl - Leibniz-Zentrum
  f{\"{u}}r Informatik, 2017.

\bibitem[Rag08]{Raghavendra08}
Prasad Raghavendra.
\newblock Optimal algorithms and inapproximability results for every csp?
\newblock In Cynthia Dwork, editor, {\em Proceedings of the 40th Annual {ACM}
  Symposium on Theory of Computing, Victoria, British Columbia, Canada, May
  17-20, 2008}, pages 245--254. {ACM}, 2008.

\bibitem[Sim84]{MR776819}
Mikl\'{o}s Simonovits.
\newblock Extremal graph problems, degenerate extremal problems, and
  supersaturated graphs.
\newblock In {\em Progress in graph theory ({W}aterloo, {O}nt., 1982)}, pages
  419--437. Academic Press, Toronto, ON, 1984.

\bibitem[SV26a]{SV26a}
Amatya Sharma and Santhoshini Velusamy.
\newblock Characterizing streaming decidability of csps via non-redundancy,
  2026.

\bibitem[SV26b]{SV26b}
Amatya Sharma and Santhoshini Velusamy.
\newblock Non-redundancy of low-arity symmetric boolean csps, 2026.

\bibitem[Zhu20]{zhuk2020proof}
Dmitriy Zhuk.
\newblock A proof of the csp dichotomy conjecture.
\newblock {\em Journal of the ACM (JACM)}, 67(5):1--78, 2020.

\end{thebibliography}

\end{document}